\providecommand\@dotsep{5}
\def\listtodoname{}
\def\listoftodos{\@starttoc{tdo}\listtodoname}
\newcounter{nmcomment}
\newcommand{\nmtodo}[1]{%
    % initials of the author (optional) + note in the margin
    \refstepcounter{nmcomment}%
    {%
        \todo[color={LightSteelBlue},size=\small,inline]{%
            \textbf{Comment [NM\thenmcomment ]:}~#1}%
}}
\newcounter{hmcomment}
\title{Imbalance Parameterized by Twin Cover Revisited} %TODO Please add
\titlerunning{Imbalance Parameterized by Twin Cover Revisited} %TODO optional, please use if title is longer than one line
\author{Neeldhara Misra}{Indian Institute of Technology, Gandhinagar, India}{neeldhara.m@iitgn.ac.in}{}{}
\author{Harshil Mittal}{Indian Institute of Technology, Gandhinagar, India}{mittal\_harshil@iitgn.ac.in}{}{}
\authorrunning{N. Misra and H. Mittal} %TODO mandatory. First: Use abbreviated first/middle names. Second (only in severe cases): Use first author plus 'et al.'
\keywords{Imbalance, Twin Cover, Layout Problems, Partition, XP, FPT} %TODO mandatory; please add comma-separated list of keywords
\begin{document}

\maketitle

%TODO mandatory: add short abstract of the document
\begin{abstract}
We study the problem of \textsc{Imbalance} parameterized by the twin cover of a graph. We show that \textsc{Imbalance} is XP parameterized by twin cover, and FPT when parameterized by the twin cover and the size of the largest clique outside the twin cover. In contrast, we introduce a notion of succinct representations of graphs in terms of their twin cover and demonstrate that \textsc{Imbalance} is NP-hard in the setting of succinct representations, even for graphs that have a twin cover of size one.  
\end{abstract}

\section{Introduction}
\label{sec:intro}
Graph layout problems are combinatorial optimization problems where the objective is to find a permutation of the vertex set that optimizes some function of interest. In this paper we focus on the problem of determining the \emph{imbalance} of a graph $G$. Given a permutation $\pi$ of $V$, we define the left and right neighborhood of a vertex, $N_L(v)$ and $N_R(v)$, to be the number of vertices in the neighborhood of $v$ that were placed before and after $v$ in $\pi$. The imbalance of $v$ is defined as the absolute difference between these quantities, that is, $\big||N_L(v)|-|N_R(v)|\big|$ and the imbalance of the graph $G$ with respect to $\pi$ is simply the sum of the imbalances of the individual vertices. The imbalance of $G$ is the minimum imbalance of $G$ over all orderings of $V$, and an ordering yielding this imbalance is called an optimal ordering\footnote{We refer the reader to the preliminaries for formal definitions of the terminology that we use in this section.} The problem was introduced by~\cite{BiedlCGHW05}, and finds several applications, especially in graph drawing~\cite{Kant96,KantH97,PapakostasT98,Wood03,Wood04}.

\textsc{Imbalance} is known to be NP-complete for several special classes of graphs, including bipartite graphs of maximum degree six~\cite{BiedlCGHW05} and for graphs of degree at most four~\cite{KaraKW07}. Further, the problem is known to be FPT when parameterized by imbalance~\cite{LokshtanovMS13}, vertex cover~\cite{FellowsLMRS08}, neighborhood diversity~\cite{Bakken18}, and the combined parameter treewidth and maximum degree~\cite{LokshtanovMS13}. Recently, it was claimed that imbalance is also FPT when parameterized by twin cover~\cite{GorznyB19}, which is a substantial improvement over the vertex cover parameter. We mention briefly here that a vertex cover of a graph $G$ is a subset $S \subseteq V(G)$ such that $G \setminus S$ is an independent set, and a twin cover of a graph is a subset $T \subseteq V(G)$ such that the connected components of $G \setminus S$ consist of vertices which are true twins --- in particular, note that each connected component induces a clique, and further, all vertices have the same neighborhood in the cover $T$. The method employed in~\cite{GorznyB19} to obtain a FPT algorithm for \textsc{Imbalance} parameterized by twin cover relies on a structural lemma which, roughly speaking, states that there exist optimal orderings where any maximal collection of true twins appear together. Based on this, it is claimed that the cliques of $G \setminus S$ can be contracted into singleton vertices to obtain an equivalent instance $H$. By observing that the twin cover of $G$ is a vertex cover of $H$, we may now use the FPT algorithm for \textsc{Imbalance} parameterized by vertex cover to obtain an imbalance-optimal ordering for $H$, and the contracted vertices can be ``expanded back in place'' to recover an optimal ordering for $G$. 

Although the structural lemma is powerful, unfortunately, we are unable to verify the safety of the contraction step based on it. We note that the graph $H$ has lost information about the sizes of the individual cliques from $G$, and the ILP formulation is blind to the distinctions between vertices that correspond to cliques of different sizes. Consider, for example, an instance with a twin cover of size one --- the reduced instance is a star and any layout of $H$ that distributes the leaves of the star almost equally around the center would have the same imbalance, and these are indeed all the optimal layouts of $H$. On the other hand, several of these layouts could have different imbalances when considered in the context of $G$. A natural fix to this issue is to mimic the ILP formulation directly for the graph $H$, and taking advantage of the structural lemma to come up with an appropriate set of variables that correspond to cliques of a fixed size. Unfortunately, this leads us to a situation where the number of variables is a function of the sizes of the cliques, and only yields an algorithm that is FPT in the size of the twin cover and the size of the largest clique outside the twin cover. 
\vspace{-10pt}
\paragraph*{Our Contributions.} With the premise that there is more to the twin cover parameterization for \textsc{Imbalance}, the focus of this paper is on the complexity of \textsc{Imbalance} parameterized by twin cover. We demonstrate that the problem is in XP when parameterized by twin cover (Theorem~\ref{thm:xp}), and FPT when parameterized by the twin cover and the size of the largest clique (Theorem~\ref{thm:fpt}). The first result is based on a slightly non-trivial dynamic programming algorithm, which can be thought of as a generalization of the classic dynamic programming routine for the Partition problem, in which we are given $n$ numbers and the question is if they can be partitioned into two groups of equal sum. Indeed, the approach is inspired by the fact that the problem of finding the optimal layout for graphs that have a twin cover of size one is essentially equivalent to the Partition problem. However, generalizing to larger-sized twin covers involves accounting for several details and we also discuss why the natural brute-force approaches to an XP algorithm end up failing. The second result is based on the ILP formulation that we alluded to earlier. We mention here that we rely crucially on the structural result of~\cite{GorznyB19} for arguing the correctness of our algorithmic approaches.

We also propose a natural notion of a succinct representation for graphs with bounded twin covers. Note that such graphs can be specified completely by the adjacency matrix of the twin cover and for each clique outside the twin cover, its size and its neighborhood in the twin cover. In contrast to the algorithmic results above, the resemblance to Partition leads us to the curious observation that the problem of \textsc{Imbalance} is NP-hard even for graphs that have a twin cover of size one in the succinct setting. This is formalized in Theorem~\ref{thm:paranphard}. We find it particularly interesting that \textsc{Imbalance} is a problem for which the choice of representation leads to a stark difference in the complexity of the problem. 

We note here that several FPT algorithms for other problems parameterized by twin cover remain FPT in a ``strongly polynomial'' sense and can be easily adapted to the setting of succinct input. For example, the algorithm for \textsc{Equitable Coloring}~\cite{Ganian11} relies on reducing the problem to a maximum flow formulation where the sizes of the cliques outside the twin cover feature as capacities in the flow network, and this approach remains efficient for succinct input since the maximum flow can be found in strongly polynomial time. Similarly, the problems of \textsc{PreColoring Extension}, \textsc{Chromatic Number}, \textsc{MaxCut}, and \textsc{Feedback Vertex Set} as proposed by Ganian~\cite{Ganian11} can be easily adapted to being strongly polynomial in our succinct representation.

This paper is organized as follows. We begin by describing the notation and terminology that is the most relevant to our discussions in the next section, and refer the reader to~\cite{CyganFKLMPPS15} for background on the parameterized complexity framework and to~\cite{DiazPS02} for a survey of graph layout problems. We then establish the para-NP-hardness of \textsc{Imbalance} parameterized by twin cover in the setting of succinct representations in Section~\ref{sec:partition}. The XP algorithm when parameterized by the twin cover, and the FPT algorithm in the combined parameter of twin cover and largest clique size, are described in Sections~\ref{sec:xp} and~\ref{sec:ilp}, respectively.  

\section{Preliminaries}
\label{sec:prelims}
We use $G = (V,E)$ to denote an undirected, simple graph unless mentioned otherwise, and we will typically use $n$ and $m$ to denote $|V|$ and $|E|$, respectively. The \emph{neighborhood} of a vertex $v \in V$ is given by $N(v) := \{u ~|~ (u,v) \in E\}$. The \emph{closed neighborhood} of a vertex $v$ is given by $N[v] := N(v) \cup \{v\}$. Likewise, the open and closed neighborhoods of a set $S \subseteq V$ are defined as: $N(S) := \{v ~|~ v \notin S \mbox{ and } \exists u \in S \mbox{ such that } (u,v) \in E\} \mbox{ and } N[S] := N(S) \cup S,$ respectively. A subset $Y \subseteq V$ is said to be a set of true twins in $G$ if for every pair of vertices $u,v$ in $Y$, $N[u]=N[v]$.

Let $\mathcal{S}(V)$ denote the set of all orderings of $V(G)$, and let $\sigma$ be an arbitrary but fixed ordering of $V(G)$. For $1 \leq i \leq n$, $i^{th}$ vertex in the ordering is denoted by $\sigma(i)$. The relation $<_\sigma$ is defined as $u <_\sigma v$ if and only if $u$ precedes $v$ in the ordering $\sigma$. We also define the \emph{left neighborhood} and \emph{right neighborhood} of a vertex $v$ in the natural way:

$$ N_L(v,\sigma) = \{ u ~|~ u \in N(v) \mbox{ and } u <_\sigma v\} \mbox{ and } N_R(v,\sigma) = \{ u ~|~ u \in N(v) \mbox{ and } v <_\sigma u\} .$$

We also use $p(v,\sigma)$ and $q(v,\sigma)$ to denote the sizes of $N_L(v,\sigma)$ and $N_R(v,\sigma)$, respectively, and we refer to these numbers as the \emph{predecessors} and the \emph{successors} of $v$ with respect to $\sigma$. If the permutation $\sigma$ is clear form the context, we use the terms predecessors and successors without qualifying for $\sigma$. 

An ordering $\sigma$ of $V$ is said to be a \emph{clean ordering} if for every inclusion-wise maximal subset $Y \subseteq V$ that forms a set of true twins in $G$, the vertices of $Y$ appear consecutively in $\sigma$, i.e., $Y$ contains all vertices in $V$ that lie between the smallest and largest elements of $Y$ (with respect to $<_\sigma$).

\textbf{Imbalance.} The \emph{imbalance} of a vertex $v$ with respect to an ordering $\sigma$ of $V$ is denoted $\mathcal{I}(v,\sigma)$, and is defined as the absolute difference between the predecessors and the successors of $v$, that is, $\mathcal{I}(v,\sigma) = |p(v,\sigma) - q(v,\sigma)|$. The imbalance of an ordering $\sigma$, denoted $\mathcal{I}(\sigma)$, is the total imbalance of all the vertices with respect to $\sigma$, and the imbalance of the graph $G$ is minimum imbalance over all permutations of $V$:

$$ \mathcal{I}(G) = \min_{\sigma \in \mathcal{S}(V)} \mathcal{I}(\sigma), \mbox{ where } \mathcal{I}(\sigma) = \sum_{v \in V} \mathcal{I}(v,\sigma).$$

An ordering $\sigma$ of $V$ is said to be an \emph{imbalance optimal ordering} if $\mathcal{I}(\sigma)=\mathcal{I}(G)$. We recall the following fact from~\cite{GorznyB19}.

\begin{lemma}[\cite{GorznyB19}]\label{lem:truetwinsstaytogether} There exists a clean imbalance optimal ordering. 
\end{lemma}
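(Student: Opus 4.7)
Proof Proposal.

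My plan is an exchange argument: starting from any imbalance-optimal ordering $\sigma$, I repeatedly apply imbalance-preserving or imbalance-decreasing modifications until every maximal true twin class appears contiguously. Two elementary facts about swapping adjacent vertices $u, v$ in an ordering (with $u$ immediately preceding $v$) drive everything: (i) if $\{u,v\} \notin E(G)$, no vertex's counts of predecessors or successors change, so the total imbalance is preserved; and (ii) if $\{u,v\} \in E(G)$, only the imbalances of $u$ and $v$ themselves change, specifically $p(u) - q(u)$ increases by $2$ while $p(v) - q(v)$ decreases by $2$.

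Among all orderings with imbalance $\mathcal{I}(G)$, choose $\sigma$ that minimizes the potential $\Phi(\sigma) := \sum_{Y} \big(\max_{y \in Y} \sigma^{-1}(y) - \min_{y \in Y} \sigma^{-1}(y)\big)$, summed over all maximal true twin classes $Y$. I claim $\Phi(\sigma)=0$, which is exactly cleanliness. Suppose not; then some maximal twin class $Y = \{y_1,\dots,y_k\}$ at $\sigma$-positions $p_1<\cdots<p_k$ has $p_k-p_1 > k-1$, so a vertex $z\notin Y$ occupies a position strictly between $p_1$ and $p_k$. By the common-neighborhood structure of true twins, $z$ is adjacent in $G$ either to every $y\in Y$ or to none. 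In the non-adjacent case, observation (i) lets us push $z$ leftward past the leading contiguous block of $Y$ via non-edge swaps only, reducing the span of $Y$ by one without changing $\mathcal{I}$, and hence strictly decreasing $\Phi$ --- contradicting minimality.

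The harder case is when $z$ is adjacent in $G$ to every $y_i$. Choose $z$ to be the leftmost non-$Y$ vertex in the span of $Y$, so that $z$ lies immediately after a contiguous block $y_1, \dots, y_b$. Consider two candidate modifications: $\sigma_L$, obtained by sliding $z$ leftward past $y_b, y_{b-1}, \dots, y_1$; and $\sigma_R$, obtained by sliding $z$ rightward past $y_{b+1}, \dots, y_k$, with analogous preparatory cleanup of other non-$Y$ invaders in its path. Writing $\alpha_i := p(y_i) - q(y_i)$ and $\beta := p(z) - q(z)$, iterating observation (ii) yields
\[
\mathcal{I}(\sigma_L) - \mathcal{I}(\sigma) = \big(|\beta - 2b| - |\beta|\big) + \sum_{i=1}^{b}\big(|\alpha_i + 2| - |\alpha_i|\big),
\]
\[
\mathcal{I}(\sigma_R) - \mathcal{I}(\sigma) = \big(|\beta + 2(k-b)| - |\beta|\big) + \sum_{i=b+1}^{k}\big(|\alpha_i - 2| - |\alpha_i|\big).
\]
The true-twin condition yields the closed form $\alpha_i = 2i - 1 - k + 2 w_i^L - |W|$, where $W$ is the common outside-neighborhood of $Y$ and $w_i^L$ is the number of $W$-vertices strictly to the left of $y_i$; in particular $(\alpha_i)$ is strictly increasing. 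A sign-based case analysis leveraging this monotonicity, together with the opposite directions of the $\beta$-shifts $-2b$ and $+2(k-b)$, shows that $\mathcal{I}(\sigma_L) - \mathcal{I}(\sigma)$ and $\mathcal{I}(\sigma_R) - \mathcal{I}(\sigma)$ cannot both be positive. Whichever of $\sigma_L, \sigma_R$ is non-increasing preserves optimality while strictly decreasing $\Phi$, the desired contradiction.

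The main obstacle I anticipate is this final case analysis, which must simultaneously exploit the monotonic structure of $(\alpha_i)$ and the complementary $\beta$-shifts on the two sides. A secondary complication, that the slide defining $\sigma_R$ may cross other non-$Y$ vertices in the span, is handled by first exhausting the easy non-adjacent cleanup and then treating the remaining adjacent invaders from the two ends of the span inward, so that each designated slide passes only through vertices of $Y$.
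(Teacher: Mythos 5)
The paper itself does not prove this lemma --- it imports it from \cite{GorznyB19} --- so your argument has to stand entirely on its own, and as written it does not. Your preliminary material is fine (observations (i)--(ii), the all-or-nothing adjacency of $z$ to $Y$, and the closed form $\alpha_i = 2i-1-k+2w_i^L-|W|$ are all correct), but the decisive step, ``a sign-based case analysis \ldots shows that $\mathcal{I}(\sigma_L)-\mathcal{I}(\sigma)$ and $\mathcal{I}(\sigma_R)-\mathcal{I}(\sigma)$ cannot both be positive,'' is asserted rather than carried out, and the statement you set up is not the one you would need. Your two displayed formulas are valid only when $z$ is the \emph{only} non-$Y$ vertex in the span: for the leftmost invader the rightward slide crosses the other invaders, which changes $\beta$ by $2$ per adjacent crossing and changes those invaders' imbalances as well, so the second display is simply not $\mathcal{I}(\sigma_R)-\mathcal{I}(\sigma)$. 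Your repair (``treat the remaining adjacent invaders from the two ends of the span inward'') turns the dichotomy into a claim about two \emph{different} vertices --- leftmost invader slid left, rightmost invader slid right --- and then the ``opposite $\beta$-shifts $-2b$ and $+2(k-b)$'' cancellation you invoke is unavailable, since the two moves involve unrelated quantities $\beta$ and $\beta'$. (That two-ended claim is in fact true, but for a different reason which you never state: if the left slide strictly increases imbalance then $\alpha_b \ge -1$, because otherwise every term $|\alpha_i+2|-|\alpha_i|$ with $i\le b$ equals $-2$ and $|\beta-2b|-|\beta|\le 2b$; since $z\in W$ lies strictly between $y_b$ and $y_{b+1}$ one gets $\alpha_{b+1}\ge \alpha_b+4\ge 3$, hence every $\alpha_i$ with $i$ beyond the rightmost invader is at least $2$, each term $|\alpha_i-2|-|\alpha_i|$ equals $-2$, and $|\beta'+2(k-c)|-|\beta'|\le 2(k-c)$ forces the right slide to be non-increasing. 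This is the missing core of the proof.)

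There is a second, independent gap: your extremal argument does not close. The potential $\Phi$ (total span of the twin classes) need not decrease under your moves: evicting $z$ from $Y$'s span shifts $z$ by $b$ positions and can stretch the span of $z$'s \emph{own} maximal twin class by $b$, while $Y$'s span shrinks by only $1$; so the chosen imbalance-non-increasing move can strictly increase $\Phi$, and ``$\sigma$ minimizes $\Phi$ among optimal orderings'' yields no contradiction --- this already breaks the supposedly easy non-adjacent case, and natural substitutes (number of invaders summed over classes, number of separated twin pairs) fail for the same reason. A standard way around this is to change the elementary move: slide a twin next to another member of its own class instead of evicting the intruder; such an insertion can never split a different maximal class (if inserting $u$ next to its twin $v$ splits a block, that block consists of twins of $v$ and hence of $u$), so a span- or separation-based measure genuinely decreases. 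As written, your procedure's termination, and hence the existence of the claimed clean optimal ordering, is unproven even if one grants the dichotomy.
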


\textbf{Twin Cover.} A subset $S \subseteq V$ is called a twin cover if for every component $X$ of $G \setminus S$, vertices of $X$ form a set of true twins in $G$. In other words, for every component $X$ of $G\setminus S$,vertices of $X$ form a clique such that for every pair of vertices $u,v$ in $V(X)$, $N(u) \cap S = N(v) \cap S$. Henceforth, we will refer to the maximal cliques,or equivalently the components, of $G\setminus S$ as simply the 'cliques' of $G\setminus S$ for the sake of simplicity. We also say that $S \subseteq V$ is a \emph{$\ell$-bounded twin cover} if it is a twin cover such that every clique in $G \setminus S$ has at most $\ell$ vertices. 

% When we have a minimal twin cover $S$, it is easy to check that every inclusion-wise maximal subset of true twins is either a clique in $G \setminus S$ or is a subset of $S$. 

Note that the imbalance of a layout does not change if the positions of any pair of true twins are exchanged. Therefore, the following is an immediate consequence of Lemma~\ref{lem:truetwinsstaytogether}.

\begin{corollary}
Let $G$ be a graph and let $S \subseteq V(G)$ be a twin cover of $G$. Then, there exists an imbalance optimal ordering of $G$ where the vertices of every clique in $G \setminus S$ appear consecutively. 
\end{corollary}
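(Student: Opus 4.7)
The plan is to apply Lemma~\ref{lem:truetwinsstaytogether} to obtain a clean imbalance optimal ordering $\sigma$, and then perform purely local permutations within blocks of maximal true twin classes to make the cliques of $G\setminus S$ consecutive. The key observation is that every clique $X$ of $G\setminus S$ is, by the definition of twin cover, itself a set of true twins in $G$; hence $X$ is contained in some inclusion-wise maximal true twin class, which I will denote $Y_X$.

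First I would verify two small structural facts: distinct cliques of $G\setminus S$ lie in distinct maximal true twin classes, and $Y_X \setminus X \subseteq S$. Both reduce to the elementary observation that two distinct true twins $u \neq v$ must be adjacent, since $v \in N[v] = N[u]$ and $v \neq u$ force $v \in N(u)$. Consequently, if a vertex outside $X$ were a true twin of some vertex in $X$, it would have to be adjacent to every vertex of $X$; but any vertex of $V\setminus S$ outside $X$ lies in a different connected component of $G\setminus S$ and hence cannot be adjacent to $X$. So the only ``extra'' twins a clique of $G \setminus S$ can pick up live inside $S$, and the map $X \mapsto Y_X$ is injective.

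Next, I would carry out the local reordering. Since $\sigma$ is clean, each $Y_X$ occupies a consecutive block of positions in $\sigma$, and within this block the vertices of $X$ may be interleaved with vertices of $Y_X \cap S$. However, as noted just before the corollary, swapping the positions of two true twins preserves the imbalance, and by iterating such swaps any permutation of the positions within $Y_X$ preserves the imbalance. I would thus rearrange each block so that the vertices of $X$ occupy a consecutive sub-block; because the blocks corresponding to different maximal true twin classes are disjoint, these rearrangements can be performed simultaneously and independently. The main conceptual point -- and the only obstacle worth flagging -- is precisely the possibility that $Y_X$ strictly contains $X$ (because of twin vertices in $S$), which a priori could prevent $X$ itself from being consecutive in a clean ordering; the freedom to permute within $Y_X$ at no cost in imbalance is exactly what dissolves this issue.
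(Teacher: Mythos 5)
Your proposal is correct and follows essentially the same route as the paper: the corollary is stated there as an immediate consequence of Lemma~\ref{lem:truetwinsstaytogether} together with the observation that exchanging true twins leaves the imbalance unchanged, which is exactly the local-rearrangement step you carry out (and which the paper itself spells out inside the proof of its first proposition). Your additional checks that each clique's maximal twin class can only pick up extra vertices from $S$ and that the map $X \mapsto Y_X$ is injective are correct but not essential extras to the same argument.
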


For further discussion, a clean ordering in the context of a graph $G$ given with a twin cover $S$ is understood to be an ordering in which the vertices of every clique of $G \setminus S$ appear consecutively. Further, we also abuse language slightly and use the term ``cliques'' to always refer to the maximal cliques of $G \setminus S$, unless mentioned otherwise. 

For a graph $G$ with twin cover $S$, we define the \emph{type} of a vertex $v$ in $G \setminus S$ as $N(v) \cap S$, and the type of a clique $C$ in $G \setminus S$ as the type of any arbitrarily chosen vertex in $C$. Observe that all vertices of any clique $C \in G \setminus S$ have the same type, and therefore the notion of the type of a clique is well-defined. Note that $G$ is completely specified once the structure of a twin cover $S$ and the sizes and types of all the cliques in $G \setminus S$ is given. 

In particular, given $\mathcal{G} := (H,\{(\ell_i,S_i)~|~ 1 \leq i \leq r\})$, where each $S_i$ is a subset of $V(H)$, the graph $G$ associated with $\mathcal{G}$ is defined in the following natural way. The vertex set of $G$ is given by $V(G) := S \cup C_1 \cup \cdots C_i \cup \cdots \cup C_r,$ where $|C_i| = \ell_i$ for all $i \in [r]$ and $|S| = |V(H)|$. Now, identify the vertices of $S$ with $V(H)$ via an arbitrary but fixed mapping $f$, and define the set of edges as follows. For any pair of vertices $u,v \in S$, we have $(u,v) \in E(G)$ if and only if $(f(u),f(v)) \in E(H)$. Further, we induce a clique on each $C_i$, and finally, for any clique $C_i$ and a vertex $v \in S$ we add edges between $v$ and every vertex of $C_i$ if and only if $f(v) \in S_i$. We say that $\mathcal{G}$ provides a  \emph{succinct representation based on a twin cover}. For brevity, we will usually refer to $\mathcal{G}$ as a succinct representation of $G$. For further discussion,we use the same notation to refer to both a vertex in $V(H)$ and its preimage (under the function $f$) in $S$, i.e., for any $w$ in $V(H)$, we refer to $f^{-1}(w)$ in $S$ as $w$ for the sake of simplicity.

We now introduce the natural computational question associated with \textsc{Imbalance}. Given a graph $G = (V,E)$, a twin cover $S \subseteq V$ of size at most $k$, and a target $t$, determine if $\mathcal{I}(G) \leq t$. Unless mentioned otherwise\footnote{We also consider the parameter $(k+\ell)$ when we are given a $\ell$-bounded twin cover.}, our focus will be on \textsc{Imbalance} parameterized by $k$, the size of the twin cover. For the most part, we assume that the input graph $G$ is specified in the standard way, i.e, by its adjacency matrix or adjacency list. If, on the other hand, $G$ is specified by a succinct representation in terms of its twin cover, then we say that the input is succinct, and if this is the scenario we are in, we state it explicitly. 

% if the graph $G$ is specified by:

% \begin{itemize}
%     \item The adjacency matrix of the subgraph $G[S]$, and
%     \item a list of $r$ pairs $\{(\ell_i,S_i)~|~ 1 \leq i \leq r\}$, where $\ell_i$ and $S_i$ denote the size and the type of the $i^{th}$ clique, respectively. 
% \end{itemize}
% 1
% \nmtodo{Mention that we are only working with maximal cliques throughout, unless mentioned otherwise.} 

\section{Weak Para-NP-Hardness}
\label{sec:partition}

In this section, we establish the NP-hardness of \textsc{Imbalance} when the input is succinct, even for graphs that have a twin cover of size one. This can be interpreted as a ``weak'' para-NP-hardness result for \textsc{Imbalance} when parameterized by twin cover.  

\begin{theorem}\label{thm:paranphard}
For succinct input, \textsc{Imbalance} is NP-hard even for graphs that have a twin cover of size one.
\end{theorem}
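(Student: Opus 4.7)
The plan is to reduce from the weakly NP-hard \textsc{Partition} problem: given positive integers $\ell_1, \ldots, \ell_r$ summing to $2N$, decide if they can be split into two equal-sum groups. I would construct the succinct representation $\mathcal{G} := (H, \{(\ell_i, \{s\}) : 1 \le i \le r\})$, where $H$ consists of the single vertex $s$ (so the twin cover is $S = \{s\}$), and each clique $C_i$ in $G \setminus S$ has size $\ell_i$ and type $\{s\}$. The encoding size of $\mathcal{G}$ is $O(r \log \max_i \ell_i)$, which is polynomial in the input size of the \textsc{Partition} instance, so the reduction itself runs in polynomial time.

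By Lemma~\ref{lem:truetwinsstaytogether}, it suffices to analyze imbalance over clean orderings. Since $|S| = 1$, in any clean ordering each clique $C_i$ forms a contiguous block that lies entirely on one side of $s$, and so every clean ordering induces a partition of the cliques into a ``left'' collection $\mathcal{L}$ and a ``right'' collection $\mathcal{R}$. A direct computation shows that for a vertex placed at the $j$-th position of a left clique $C_i$, the imbalance contribution (counting also its edge to $s$) is $|2j - \ell_i - 2|$, whereas for a right clique the same vertex contributes $|2j - \ell_i|$. Summing these over $j = 1, \ldots, \ell_i$ and applying the index shift $j \mapsto j-1$ shows that both sides yield the same quantity $I(\ell_i)$. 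Moreover, since vertices from distinct cliques in $G \setminus S$ are non-adjacent, neither the permutation of cliques within $\mathcal{L}$ or $\mathcal{R}$ nor the internal arrangement of any clique affects the total imbalance beyond the $I(\ell_i)$ terms and the imbalance of $s$ itself.

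Consequently, the total imbalance of any clean ordering decomposes as
$$\sum_{i=1}^r I(\ell_i) \;+\; \Bigl|\sum_{C_i \in \mathcal{L}} \ell_i \;-\; \sum_{C_j \in \mathcal{R}} \ell_j\Bigr|,$$
in which the first term is a fixed constant and only the second term, which is precisely the imbalance of $s$, depends on the chosen ordering. Minimizing the total imbalance therefore reduces exactly to minimizing the \textsc{Partition} objective, so $\mathcal{I}(G) = \sum_i I(\ell_i)$ if and only if $(\ell_1, \ldots, \ell_r)$ admits a balanced split. The main (mild) obstacle of the argument is the verification that internal imbalances balance symmetrically across the two sides of $s$; this is a routine index-shift calculation but essential, since otherwise one would not be able to isolate the partition objective cleanly from the rest of the imbalance budget.
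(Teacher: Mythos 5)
Your proposal is correct and follows essentially the same route as the paper: the identical reduction from \textsc{Partition} (one cover vertex, cliques of sizes $\ell_i$ all attached to it), together with the same decomposition of any clean ordering's imbalance into a fixed intrinsic contribution per clique (your $I(\ell_i)$ equals the paper's $\gamma(\ell_i)+(\ell_i \bmod 2)$) plus the imbalance of the cover vertex, which is exactly the partition objective. The only presentational difference is that the paper derives the fixed term via a general lower-bound proposition for arbitrary twin covers and then specializes to size one, whereas you compute it directly by the per-position index-shift argument.
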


We establish this result by a reduction from \textsc{Partition} problem, which is well-known to be weakly NP-hard. Recall that the input to \textsc{Partition} is a set of positive integers $\{a_1, \ldots, a_r\}$, and the question is if there exists a subset $S \subset [r]$ such that $\sum_{i \in S}a_i = \sum_{i \notin S}a_i$. An intuitive visual for graphs that have a twin cover of size one is to imagine that we have balls suspended from a single point of varying weights, proportional to the sizes of the cliques, and a layout that optimizes the imbalance is faced with the task of distributing these balls on either side of the suspension point so that the total weight on either side is equally distributed. To formalize this idea, we first argue a lower bound for the imbalance of any graph that has a twin cover of size one. To begin with, consider the following function:

 \begin{equation*}
    \gamma(\ell) :=
    \begin{cases}
      \ell^2/2 & \text{if } \ell \text{ is even},\\
      (\ell^2-1)/2 & \text{if } \ell \text{ is odd}.
    \end{cases}
\end{equation*}

We define the \emph{intrinsic imbalance} of a clique $C$ on $\ell$ vertices as $\gamma(\ell)$. Our first claim is the following. 

\begin{proposition}
Let $G$ be given by $\mathcal{G} = (H,\{(\ell_i,S_i)~|~ 1 \leq i \leq r\})$. Then:
$$\mathcal{I}(G) \geq \left(\sum_{i=1}^r \gamma(\ell_i)\right).$$ 
\end{proposition}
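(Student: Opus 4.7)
The plan is to exploit the Lemma~\ref{lem:truetwinsstaytogether} to reduce to a clean optimal ordering, and then show that each clique $C_i$ contributes at least $\gamma(\ell_i)$ to the total imbalance, regardless of the surrounding context. Since vertex imbalances are nonnegative, summing these per-clique lower bounds will yield the stated inequality.

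First I would fix an imbalance-optimal ordering $\sigma$ of $V(G)$; by Lemma~\ref{lem:truetwinsstaytogether}, we may assume $\sigma$ is clean, so the $\ell_i$ vertices of each clique $C_i$ occupy $\ell_i$ consecutive positions in $\sigma$. Since every vertex of $C_i$ has the same neighborhood in $S$, namely $S_i$, there are well-defined quantities $L_i$ and $R_i = |S_i| - L_i$ counting the vertices of $S_i$ that lie to the left (respectively, right) of the block occupied by $C_i$. For the vertex of $C_i$ occupying the $j$-th position within this block ($1 \le j \le \ell_i$), one has $p(v,\sigma) = (j-1) + L_i$ and $q(v,\sigma) = (\ell_i - j) + R_i$, so its imbalance equals $|\,2j - \ell_i - 1 + d_i\,|$ with $d_i := L_i - R_i$.

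Next I would sum these contributions inside a single clique. Defining
\[
f_i(d) \;:=\; \sum_{j=1}^{\ell_i} \bigl|\,2j - \ell_i - 1 + d\,\bigr|,
\]
the observation is that the multiset $\{2j - \ell_i - 1 : 1 \le j \le \ell_i\}$ is symmetric about $0$. Hence $f_i$ is a convex function of $d \in \mathbb{R}$ whose minimizer is the median of this multiset, which is $0$. A direct calculation handles the two parities separately: for even $\ell_i$ the values are $\pm 1, \pm 3, \ldots, \pm(\ell_i-1)$, giving $f_i(0) = 2(1 + 3 + \cdots + (\ell_i-1)) = \ell_i^2/2$; for odd $\ell_i$ the values are $0, \pm 2, \pm 4, \ldots, \pm(\ell_i-1)$, giving $f_i(0) = 2(2 + 4 + \cdots + (\ell_i-1)) = (\ell_i^2-1)/2$. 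In both cases $f_i(0) = \gamma(\ell_i)$, so $f_i(d_i) \ge \gamma(\ell_i)$ for every attainable $d_i$.

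Finally I would assemble the global bound: $\mathcal{I}(\sigma) = \sum_{v \in V(G)} \mathcal{I}(v, \sigma) \ge \sum_{i=1}^{r} \sum_{v \in C_i} \mathcal{I}(v, \sigma) = \sum_{i=1}^r f_i(d_i) \ge \sum_{i=1}^{r} \gamma(\ell_i)$, using only that contributions from vertices in $S$ are nonnegative. Since $\sigma$ is optimal, this is exactly $\mathcal{I}(G)$, so the inequality follows. The one subtle step, and the one I would present carefully, is the reduction to clean orderings: without cleanness the quantity $d_i$ would need to be indexed per-vertex (since different members of $C_i$ would see different counts of $S_i$-vertices on either side), and the symmetry-around-zero argument that underwrites $f_i(d) \ge \gamma(\ell_i)$ would no longer apply uniformly — this is precisely what Lemma~\ref{lem:truetwinsstaytogether} buys us.
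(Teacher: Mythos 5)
Your argument is correct and follows essentially the same route as the paper: reduce to a clean optimal ordering, write each clique vertex's imbalance as $\lvert 2j-\ell_i-1+d_i\rvert$ with a clique-wide offset $d_i$, and lower-bound the per-clique sum by its value at $d_i=0$, which equals $\gamma(\ell_i)$ by the same parity computation. The only cosmetic difference is that you justify the key inequality via convexity/median symmetry (the paper states it directly), and the step from cleanness to ``each $C_i$ is consecutive'' is exactly what the paper's corollary to Lemma~\ref{lem:truetwinsstaytogether} records, since $C_i$ may sit inside a larger maximal twin class.
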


\begin{proof}
Let $\sigma$ be an arbitrary clean imbalance optimal ordering of $V$. For every $i$ in $[r]$, let $Y_i$ be the inclusion-wise maximal set of true twins containing the vertices of $C_i$. The vertices of $Y_i$ appear consecutively in $\sigma$.Changing the inner order of true twins in a layout has no effect on imbalance. So without loss of generality, we can assume  that the vertices of $C_i$ appear consecutively in $\sigma$. Label them in the order of their appearance as $v^1_i <_\sigma .....<_\sigma v^{\ell_i}_i$. For every $j$ in $[\ell_i]$, $q(v^j_i, \sigma) = (\ell_i-j) + q(v^{\ell_i}_i, \sigma)$ and $p(v^j_i, \sigma)= (j-1) + p(v^1_i,\sigma)$. Now,
$$\mathcal{I}(G)= \mathcal{I}(\sigma)\geq \sum_{i=1}^{r} \sum_{j=1}^{\ell_i} \mathcal{I}(v^j_i,\sigma)= \sum_{i=1}^{r} \sum_{j=1}^{\ell_i} \left|q(v^{\ell_i}_i,\sigma)- p(v^1_i,\sigma) +(\ell_i +1)-2j \right|$$
For every $i$ in $[r]$,
$$\sum_{j=1}^{\ell_i} |q(v^{\ell_i}_i,\sigma)- p(v^1_i,\sigma) +(\ell_i +1)-2j|\geq \sum_{j=1}^{\ell_i} |(\ell_i+1) -2j|$$
If $\ell_i$ is even, $$\sum_{j=1}^{\ell_i} \left|(\ell_i+1)-2j \right| = 2\sum_{j=1}^{\frac{\ell_i}{2}} (2j-1) = \frac{\ell_i^2}{2}$$
If $\ell_i$ is odd, $$\sum_{j=1}^{\ell_i} \left|(\ell_i+1)-2j \right| = 2\sum_{j=1}^{\frac{\ell_i-1}{2}} 2j = \frac{\ell_i^2 - 1}{2}$$
Hence, $\mathcal{I}(G) \geq \sum_{i=1}^r \gamma(\ell_i)$.

\end{proof}

For a graph $G$ given by $\mathcal{G} := (H,\{(\ell_i,S_i)~|~ 1 \leq i \leq r\})$, define:

$$\iota(G) := \left(\sum_{i=1}^r \gamma(\ell_i)\right).$$ 

The following observation is based on the fact that if a graph has a twin cover of size one, then final imbalance of odd-sized cliques is one more than their intrinsic imbalance, and the final imbalance of even-sized cliques is equal to their intrinsic imbalance, and this is true for any clean ordering, irrespective of where the cliques are placed in the layout relative to the twin cover vertex. 

\begin{proposition}
\label{prop:imbtc1}
Let $G$ be a connected graph given by $\mathcal{G} = (H,\{(\ell_i,S_i)~|~ 1 \leq i \leq r\})$ where $H = \{v\}$. Then $\mathcal{I}(G) \geq \iota(G) + \sum_{i=1}^{r} (\ell_i ~mod~ 2)$.
% \begin{equation*}
%     \ell_i ~mod~ 2 :=
%     \begin{cases}
%       0 & \text{if } \ell \text{ is even},\\
%       1 & \text{if } \ell \text{ is odd}.
%     \end{cases}
% \end{equation*}
\end{proposition}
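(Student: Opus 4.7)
The plan is to pick up where the proof of the preceding proposition left off and sharpen the inequality using the fact that $G$ is connected and the twin cover is a single vertex. Connectedness forces every clique $C_i$ of $G\setminus\{v\}$ to be adjacent to $v$, so $S_i=\{v\}$ for each $i\in[r]$.

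Starting from a clean imbalance-optimal ordering $\sigma$, within which the vertices of each $C_i$ appear consecutively as $v_i^1<_\sigma\cdots<_\sigma v_i^{\ell_i}$, the vertex $v$ necessarily lies either entirely before this block or entirely after it. Because $v$ is the only neighbor of the vertices of $C_i$ outside $C_i$, one gets $p(v_i^1,\sigma)=1$ and $q(v_i^{\ell_i},\sigma)=0$ in the first case, and $p(v_i^1,\sigma)=0$ and $q(v_i^{\ell_i},\sigma)=1$ in the second. Substituting these values into the identity
$$\sum_{j=1}^{\ell_i}\mathcal{I}(v_i^j,\sigma)=\sum_{j=1}^{\ell_i}\left|q(v_i^{\ell_i},\sigma)-p(v_i^1,\sigma)+(\ell_i+1)-2j\right|$$
from the previous proof collapses the right-hand side to $\sum_{j=1}^{\ell_i}|\ell_i-2j|$ or $\sum_{j=1}^{\ell_i}|\ell_i+2-2j|$; the re-indexing $k=\ell_i+1-j$ shows these two sums are equal.

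Next, I would evaluate $\sum_{j=1}^{\ell_i}|\ell_i-2j|$ by parity. For even $\ell_i=2m$, pairing terms symmetrically around $j=m$ yields $\ell_i^2/2=\gamma(\ell_i)$, matching the bound of the previous proposition exactly. For odd $\ell_i=2m+1$, all summands are odd, and partitioning the range into $j\in\{1,\dots,m\}$ and $j\in\{m+1,\dots,2m+1\}$ gives $m^2+(m+1)^2=\gamma(\ell_i)+1$. Hence each clique of odd size contributes exactly one unit of extra imbalance beyond its intrinsic value, regardless of whether $v$ precedes or follows the clique.

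Summing these per-clique contributions over all $i\in[r]$ and then dropping the nonnegative term $\mathcal{I}(v,\sigma)$ yields $\mathcal{I}(G)=\mathcal{I}(\sigma)\geq\iota(G)+\sum_{i=1}^{r}(\ell_i\bmod 2)$, as desired. The only real obstacle is the bookkeeping across the two parity cases together with the observation that the connectedness of $G$ forces $v\in S_i$ for every $i$; beyond that the argument is a direct specialization of the estimate already carried out in the previous proposition.
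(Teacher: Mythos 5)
Your proposal is correct and follows essentially the same route as the paper: reuse the setup and identity from the preceding proposition, use connectedness with $|H|=1$ to force $\{p(v_i^1,\sigma),q(v_i^{\ell_i},\sigma)\}=\{0,1\}$, and evaluate the resulting sum by parity to pick up the extra $\ell_i \bmod 2$ per odd clique. Your explicit parity computation just spells out the identity the paper states more tersely.
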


\begin{proof}
We proceed in exactly the same way as in proof of Proposition 1 to obtain $\mathcal{I}(G)\geq \sum_{i=1}^{r} \sum_{j=1}^{\ell_i} \left|q(v^{\ell_i}_i,\sigma)- p(v^1_i,\sigma) +(\ell_i +1)-2j\right|$. Since $|H|=1$ and $G$ is a connected graph,for every $i$ in $[r]$, either $p(v^1_i,\sigma)=1,q(v^{\ell_i}_i,\sigma)=0$ or $p(v^1_i,\sigma)=0,q(v^{\ell_i}_i,\sigma)=1$. In both cases, we have:
$$\sum_{j=1}^{\ell_i} \left|q(v^{\ell_i}_i,\sigma)- p(v^1_i,\sigma) +(\ell_i +1)-2j\right| = \left(\sum_{j=1}^{\ell_i} \left|(\ell_i +1)-2j\right|\right) + (\ell_i ~mod~ 2)$$
Hence, $\mathcal{I}(G)\geq \sum_{i=1}^{r} (\gamma(\ell_i) + (\ell_i ~mod~ 2)) = \iota(G) + \sum_{i=1}^{r} (\ell_i ~mod~ 2)$. 

\end{proof}

We have the following straightforward consequence of Proposition~\ref{prop:imbtc1}.

\begin{corollary}
\label{cor:imbtc1}
Let $G$ be a connected graph given by $\mathcal{G} = (H,\{(\ell_i,S_i)~|~ 1 \leq i \leq r\})$ where $H = \{v\}$. Then $\mathcal{I}(G) = \iota(G) + \sum_{i=1}^{r} (\ell_i ~mod~ 2) + \mathcal{I}(v)$.
\end{corollary}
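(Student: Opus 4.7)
The plan is to sharpen the inequality of Proposition~\ref{prop:imbtc1} into an exact equality for the aggregate contribution of the clique vertices in a clean ordering, and then add the separately controlled contribution of the twin cover vertex $v$.

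First I would invoke Lemma~\ref{lem:truetwinsstaytogether} to fix a clean imbalance-optimal ordering $\sigma$, and re-adopt the notation of Proposition~\ref{prop:imbtc1}: each clique $C_i$ occupies $\ell_i$ consecutive positions labelled $v_i^1 <_\sigma \cdots <_\sigma v_i^{\ell_i}$. The key observation I would use is that since $H = \{v\}$ and $G$ is connected, the only vertex outside $C_i$ adjacent to any vertex of $C_i$ is $v$. In a clean ordering, $v$ lies entirely on one side of $C_i$, so either $p(v_i^1,\sigma) = 1$ and $q(v_i^{\ell_i},\sigma) = 0$, or $p(v_i^1,\sigma) = 0$ and $q(v_i^{\ell_i},\sigma) = 1$. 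In both cases $|q(v_i^{\ell_i},\sigma) - p(v_i^1,\sigma)| = 1$, which is precisely the regime that makes the chain of inequalities in the proof of Proposition~\ref{prop:imbtc1} collapse to an equality:
$$\sum_{i=1}^r \sum_{j=1}^{\ell_i} \mathcal{I}(v_i^j,\sigma) \;=\; \iota(G) + \sum_{i=1}^r (\ell_i \bmod 2).$$

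Next I would add the one term not yet accounted for, namely $\mathcal{I}(v,\sigma)$, to obtain $\mathcal{I}(\sigma) = \iota(G) + \sum_{i=1}^r (\ell_i \bmod 2) + \mathcal{I}(v,\sigma)$. Since $\sigma$ was chosen optimal and clean orderings realise the optimum by Lemma~\ref{lem:truetwinsstaytogether}, the value $\mathcal{I}(v,\sigma)$ is the minimum of $\mathcal{I}(v,\cdot)$ over clean orderings, which is precisely what the statement denotes by $\mathcal{I}(v)$; the reverse direction, $\mathcal{I}(G)\le\iota(G)+\sum_i(\ell_i \bmod 2)+\mathcal{I}(v)$, is obtained by applying the equality above to any clean ordering attaining this minimum.

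The only real subtlety — and what separates the corollary from Proposition~\ref{prop:imbtc1} — is the rigidity step: I must justify that $|H|=1$ together with connectivity forces $\{p(v_i^1,\sigma), q(v_i^{\ell_i},\sigma)\} = \{0,1\}$ in every clean ordering, so that the per-clique contribution is pinned down to $\gamma(\ell_i) + (\ell_i \bmod 2)$ exactly rather than merely bounded below by it. Once this rigidity is in hand, the rest is arithmetic bookkeeping inherited from Proposition~\ref{prop:imbtc1}, together with the observation that the only degree of freedom left in a clean ordering — namely, the choice of how to split the cliques between the two sides of $v$ — is exactly what determines $\mathcal{I}(v,\sigma)$.
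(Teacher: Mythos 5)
Your proposal is correct and is essentially the argument the paper intends: the paper states the corollary without proof as a ``straightforward consequence'' of Proposition~\ref{prop:imbtc1}, whose proof already establishes the per-clique equality $\sum_{j}\bigl|q(v_i^{\ell_i},\sigma)-p(v_i^1,\sigma)+(\ell_i+1)-2j\bigr|=\gamma(\ell_i)+(\ell_i \bmod 2)$ for every clean ordering, the only discarded term being $\mathcal{I}(v,\sigma)\ge 0$; retaining that term and minimizing over clean orderings (which suffice by Lemma~\ref{lem:truetwinsstaytogether}) is exactly your argument. Your reading of $\mathcal{I}(v)$ as the minimum of $\mathcal{I}(v,\cdot)$ over clean orderings also matches how the corollary is used in the reduction, so there is no gap.
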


We are now ready to describe the reduction from \textsc{Partition}.

\nmtodo{May want to simplify the use of $f^{-1}$ by some clarification or abuse of notation declared upfront.}

\begin{proof} \emph{(Proof of Theorem~\ref{thm:paranphard}.)} Given an instance $\mathcal{P} := \{a_1, \ldots, a_r\}$ of \textsc{Partition}, let $G_\mathcal{P}$ be given by $\mathcal{G}(\mathcal{P}) = (H,\{(\ell_i,S_i)~|~ 1 \leq i \leq r\})$, which in turn is defined as follows: $H = \{v\}, \ell_i = a_i \mbox{ for all } i \in [r],\mbox{ and }S_i = \{v\}\mbox{ for all }i \in [r].$

The instance of \textsc{Imbalance} is now given by $(G_\mathcal{P}, S = \{v\}, t = \iota(G_\mathcal{P})+\sum_{i=1}^{r} (a_i ~mod~ 2))$. This completes the construction, and we now turn to a proof of equivalence.

\emph{Forward Direction.}
Suppose there exists a subset $A\subset[r]$ such that $\sum_{i\in A} a_i = \sum_{i\in [r]\setminus A} a_i$. Let $\mathcal{L} := \{C_i ~|~ i \in A\}$ and $\mathcal{R} := \{C_i ~|~ i \in [r] \setminus A\}$.
Consider an arbitrary clean ordering $\sigma$ of $V(G_\mathcal{P})$ that places the vertices of cliques in $\mathcal{L}$ to the left of $v$ and the vertices of cliques in $\mathcal{R}$ to the right of $v$,i.e.,$\sigma$ is an arbitrary clean ordering such that
\begin{itemize}
    \item for every clique $X$ in $\mathcal{L}$, for every vertex $u$ in $X$, $u<_{\sigma} v$
    \item for every clique $Y$ in $\mathcal{R}$, for every vertex $w$ in $Y$, $w>_{\sigma} v$
\end{itemize}
Then, observe that:

$$\mathcal{I}(v) = \left|\sum_{i\in A} a_i - \sum_{i\in [r]\setminus A} a_i \right| = 0,$$

and by Corollary~\ref{cor:imbtc1}, we have that $\mathcal{I}(G_\mathcal{P}) = \iota(G_\mathcal{P}) + \sum_{i=1}^{r} (a_i ~mod~ 2)$, as desired. 

% $$\mathcal{I}(G_\mathcal{P})\leq \mathcal{I}(\sigma)= \mathcal{I}(f^{-1}(v),\sigma) + \sum_{X \in \mathcal{L}} \sum_{u \in X} \mathcal{I}(u,\sigma) + \sum_{Y \in \mathcal{R}} \sum_{w \in Y} \mathcal{I}(w,\sigma)$$
% $$=\mathcal{I}(f^{-1}(v),\sigma) + \sum_{i \in A} \sum_{j=1}^{a_i} \left| (a_i-j+1)-(j-1)  \right| + \sum_{i \in [r]\setminus A} \sum_{j=1}^{a_i} | (a_i-j)-j |$$
% $$=\left|\sum_{i\in A} a_i - \sum_{i\in [r]\setminus A} a_i \right| + \sum_{i \in [r]} \left(2 + \sum_{j=1}^{a_i} \left| (a_i-j)-(j-1) \right|\right) = 2r + \sum_{i \in [r]} \gamma(a_i) = \iota(G_\mathcal{P}) + 2r$$

\emph{Reverse Direction.} 
Suppose $\mathcal{I}(G_\mathcal{P})\leq \iota(G_\mathcal{P}) + \sum_{i=1}^{r} (a_i ~mod~ 2)$. Using Proposition $2$, we also know that $\mathcal{I}(G_\mathcal{P})\geq \iota(G_\mathcal{P}) + \sum_{i=1}^{r} (a_i ~mod~ 2)$. Thus, $\mathcal{I}(G_\mathcal{P})= \iota(G_\mathcal{P}) + \sum_{i=1}^{r} (a_i ~mod~ 2)$. Let $\sigma$ be an arbitrary clean imbalance optimal ordering of $V(G_\mathcal{P})$. 

Without loss of generality, vertices of $C_i$ appear consecutively in $\sigma$, for every $i$ in $[r]$. Let:

\begin{itemize}
    \item $\mathcal{L} := \{ i\in[r] ~|~$ for every vertex $u$ in $C_i, u<_\sigma v \}$
    \item $\mathcal{R} := [r]\setminus \mathcal{L} = \{ i\in[r] ~|~$ for every vertex $w$ in $C_i, w >_\sigma v \}$.
\end{itemize} 

Recall that by Corollary~\ref{cor:imbtc1}, we have that $\mathcal{I}(v,\sigma) = 0$.  Thus, we have:
% $$\mathcal{I}(G_\mathcal{P})=\mathcal{I}(\sigma)=\mathcal{I}(v,\sigma) + \sum_{i \in \mathcal{L}} \sum_{u \in C_i} \mathcal{I}(u,\sigma) + \sum_{i \in \mathcal{R}} \sum_{w \in C_i} \mathcal{I}(w,\sigma)$$
% $$\Rightarrow \iota(G_\mathcal{P}) + 2r =\left|\sum_{i\in \mathcal{L}} a_i - \sum_{i\in \mathcal{R}} a_i \right| + \sum_{i \in [r]} \left(2 + \sum_{j=1}^{a_i} \left| (a_i-j)-(j-1) \right|\right)$$
$$\mathcal{I}(v,\sigma) = \left|\sum_{i\in \mathcal{L}} a_i - \sum_{i\in \mathcal{R}} a_i \right| = 0.$$
Hence, $\sum_{i\in \mathcal{L}} a_i = \sum_{i\in \mathcal{R}} a_i$, as desired.
% To show NP-completeness of \textsc{Imbalance} for succinct input, it remains to show the existence of polynomial time-verifiable certificate for every YES-instance. Let ($\mathcal{G} = (H,\{(\ell_i,S_i)~|~ 1 \leq i \leq r\},t)$ be an arbitrary YES instance. Let $\sigma$ be an arbitrary clean imbalance-optimal ordering of the vertices of graph G associated with $\mathcal{G}$ such that for every $i$ in $[r]$, vertices of $C_i$ appear consecutively in $\sigma$. 
\end{proof}

The argument above establishes the NP-hardness of \textsc{Imbalance} for succinct input, even on graphs that have a twin cover of size one. We can also establish membership in NP as follows. Observe that the imbalance of a layout is completely determined by the imbalance of the vertex $v$, and therefore it suffices to provide a partition of the cliques as a certificate. The layout corresponding to a partition of the numbers $\{\ell_i~|~ 1 \leq i \leq r\})$ into $(L,R)$ would place all cliques whose sizes correspond to numbers in $L$ to the left of the twin cover vertex, and the remaining to the right. Note that it is straightforward to compute the imbalance of the twin cover vertex in the layout associated with the given partition. Also, since the quantity $\iota(G) + \sum_{i=1}^{r} (\ell_i ~mod~ 2)$ can be computed efficiently, we can check if this layout has the desired imbalance in strongly polynomial time. Finally, note that any \textsc{Yes}-instance of the problem admits a valid certificate since there exists an optimal layout that is also clean. We generalize this argument for twin covers of arbitrary size in Section~\ref{sec:xp}.

% This concludes the proof. It remains to show the NP membership of IMBALANCE for succinct representation based on a twin cover of any size k, not necessarily one. We postpone this discussion to Corollary $4$ of Lemma $3$ in Section $4$.  

% \nmtodo{Need to update the argument above so that it works for a twin cover of any size, not just one.}

\section{An XP Algorithm}
\label{sec:xp}
In this section, our goal is to demonstrate an XP algorithm for \textsc{Imbalance} parameterized by twin cover when the entire input is given explicitly in the standard form. Throughout, we use $G$ to denote the graph given as input, $S \subseteq V(G)$ denotes a twin cover of size $k$, and the question is if $G$ admits a layout whose imbalance is at most $t$. We denote the cliques of $G \setminus S$ by $C_1, \ldots, C_r$, and we let $\ell_i$ denote the number of vertices in $C_i$. We also use $\ell$ to denote $\max\{\ell_1, \ldots, \ell_r\}$. 

To begin with, let us consider some natural brute-force approaches, described informally, that will eventually motivate the definitions that we will encounter later.  Assuming we are dealing with a \textsc{Yes}-instance, let $\sigma^\star$ denote a clean ordering of $V(G)$ whose imbalance is at most $t$. 

\paragraph*{A First Approach.} First, we may guess the relative order of the vertices of $S$ in $\sigma^\star$. Further, since $\sigma^\star$ is a clean ordering, we know that for any $C_i \in G \setminus S$ every vertex of $C_i$ lies between some consecutive pair of vertices from $S$ in $\sigma^\star$. Thus, for every clique $C_i \in G \setminus S$, we may guess it's ``final location'', i.e, the choice of the consecutive pair of twin cover vertices that the clique is sandwiched between in $\sigma^\star$. Since there are $k+1$ possible addresses for this final location, this requires examining $O((k+1)^r)$ possibilities, which is too expensive. 

\paragraph*{Using Types Based on Neighbhorhoods.} Recall that the type of a clique is defined as its neighborhood in the twin cover $S$. While there are an unbounded number of cliques in $G \setminus S$, there are at most $2^k$ types of cliques, and we might hope that cliques of the same type may be treated similarly. In particular, consider that for each location between consecutive twin cover vertices in $\sigma^\star$, we guess the number of cliques of each type that appear in that location. For a type $\tau$, let $r(\tau)$ denote the number of cliques of type $\tau$. In contrast to the brute force approach above, this would only require us to examine 

$$\prod_{\tau \in 2^S}{k+r(\tau) \choose k} = O((n^k)^{2^k})$$ 

possibilities. Unfortunately, this guess is too ``coarse'', and does not give us enough information to determine the imbalance of the layout. For example, consider a connected graph with a twin cover of size one. Here every clique has the same type (say $\tau$) and there are two locations to consider in $\sigma^\star$. Suppose we guess that we have $\frac{r}{2}$ cliques of type $\tau$ in the first location and $\frac{r}{2}$ cliques of type $\tau$ in the second location. It is easy to devise examples for which there would be multiple layouts that are consistent with this guess, each with a different imbalance. The main reason for this is that our notion of types here ignores the sizes of the cliques, which is an important ingredient in determining the exact imbalance of the layout in question. This is exemplified with a concrete example in Figure~1. 

% which shows two layouts for a connected graph with twin cover $S=\{v\}$, and four cliques $C_1,C_2,C_3$ and $C_4$ in $G\setminus S$ such that $|C_1|=3, |C_2|=1, |C_3|=1$ and $|C_4|=2$. In first layout, $C_1$ and $C_2$ are placed to the left of $v$ and $C_3$ and $C_4$ are placed to the right of $v$. Here, $\mathcal{I}(v)=1$. In the second layout, $C_3$ and $C_2$ are placed to the left of $v$ and $C_4$ and $C_1$ are placed to the right of $v$.Here, $\mathcal{I}(v)=3$. Hence, using Corollary $2$ of Proposition $2$, it is clear that these layouts have different imbalances even though both of them place $2$ cliques of type $\{v\}$ at first location and $2$ cliques of type $\{v\}$ at second location.

\begin{figure}[htp]
\label{fig:eg}
    \centering
    \includegraphics[width=6cm]{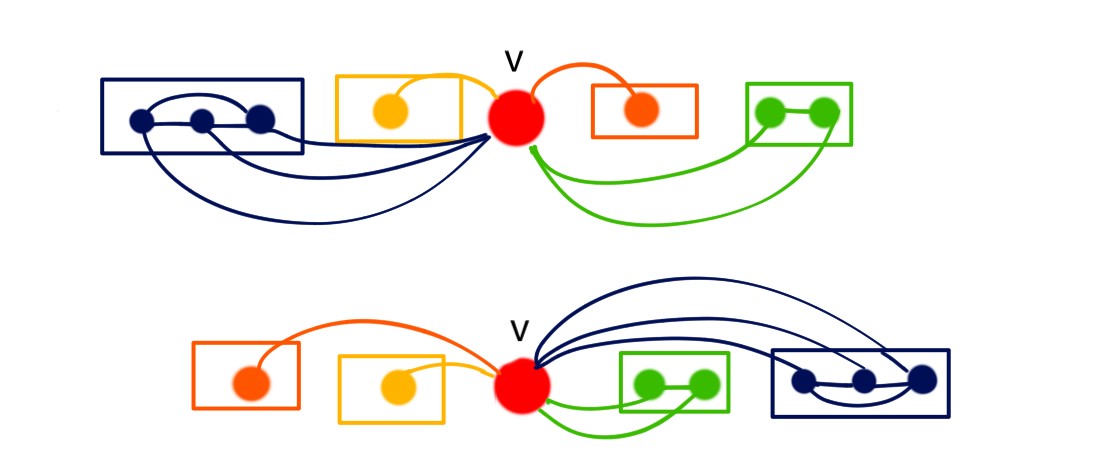}
    \caption{In first layout, $C_1$ and $C_2$ are placed to the left of $v$ and $C_3$ and $C_4$ are placed to the right of $v$ and $\mathcal{I}(v)=1$. In the second layout, $C_3$ and $C_2$ are placed to the left of $v$ and $C_4$ and $C_1$ are placed to the right of $v$ and $\mathcal{I}(v)=3$.}
\end{figure}

% \nmtodo{Add a picture to make the example explicit.}

\paragraph*{Refining Types.} A natural attempt to fix the previous approach is to refine the notion of types so that they incorporate some information about the sizes of the cliques. To this end, for each $j \in [\ell]$ and each subset $T \subseteq S$ we let $\lambda(T,j)$ denote the set of cliques of type $T$ that have size $j$. Further, we say that any clique in $\lambda(T,j)$ has a \emph{supertype} $\tau(T,j)$. This refinement, used in the framework above, does give us all the information that we need to compute the imbalance of any layout that is consistent with our guesses. On the other hand, our guesses are once again too expensive, since the size of the largest clique now features in the exponent, and is unbounded in the parameter. In particular, note that the number of possibilities that we have to examine has now increased to:

$$\prod_{(T,j) \in 2^S \times [\ell]}{k+|\lambda(T,j)| \choose k} = O((n^k)^{\ell 2^k}).$$ 

Note that although our upper bound is not tightly argued, there are graphs\footnote{Consider, for instance, the running time that is obtained when the input graph has one clique each of size one, two, three, and so on up to $O(\sqrt{n})$.} for which the running time of this algorithm would run into $\Omega(k^{\sqrt{n}})$, and therefore, we may rule out the hope for a XP bound based on an improved analysis of this algorithm.

% \nmtodo{Can we argue that the upper bound is not too lazy and there are actual examples where. this blows up to going beyond XP? Someone might argue here that if we upper bounded the sum more carefully using the fact that the $r$'s add up to $n$, for example, and that it's never the case that each of of them is $n$, etc. - we might be better off? I'd like to say in one or two sentences that the scenario could be quite bad... what if we have one clique of size one, one clique of size two, and so, up to some $\sqrt{n}$? Perhaps all of the same type? Then we get $\ell = O(\sqrt{n})$, and each term in the product is still some $k+1$... so the running time would be $O(k^{\sqrt{n}})$?}

\paragraph*{Types with Thresholds: The Best of Both Worlds?} To overcome the challenges above, we introduce a type concept that distinguishes between ``small'' and ``large'' cliques. The key observation that leads us to determine a threshold is formalized in Corollary~\ref{cor:largecliques}, but informally, it is the following idea. If two cliques $C_i$ and $C_j$ are sufficiently large and happen to have the same parity \emph{and} the same type $\tau$, but possibly different sizes (that is, $\ell_i \neq \ell_j$), then the total imbalance of their vertices in a layout $\sigma$ can be thought of as $\gamma(\ell_i) + c_\tau(x)$ and $\gamma(\ell_j) + c_\tau(y)$, where $c_\tau(\cdot)$ is a function that depends only on the location of the cliques. In particular, this means that large cliques that have the same type and parity, and which end up in the same location of a layout will have the same ``excess imbalance'' over and above their ``intrinsic imbalance'' (which is given by $\gamma$). This eventually allows us to estimate the \emph{total} imbalance of large cliques by focusing only on the excess to begin with, and finally adding the intrinsic imbalances of all large cliques across the board. 

% \nmtodo{Capture the notion of a location in a layout.}

With this observation at hand, it is tempting to repeat the brute force approach from above with this more nuanced notion of types. However, some reflection reveals that while the approach would yield a XP running time, it is again somewhat information-starved: while Lemma~\ref{lem:largecliques} will allow us to estimate the imbalance of all the cliques in any layout that respects a fixed guess, we would still face ambiguity in determining the imbalances of the twin cover vertices. Therefore, instead of employing the brute-force approach, we turn to a dynamic programming routine, which is inspired by the well-known pseudo polynomial algorithm for the \textsc{Subset Sum} problem. Indeed, for the special case when the input is a connected graph with a twin cover of size one, our algorithm coincides with the \textsc{Subset Sum} DP. 

Before describing our DP table and the associated recurrence, we first introduce some definitions. A clique $C$ in $G \setminus S$ is said to be a \emph{large} clique if $|C| > k$, and is said to be a \emph{small} clique otherwise. Further, a clique $C \in G \setminus S$ is \emph{even} (respectively, \emph{odd}) if it has an even (respectively, odd) number of vertices. Recall that the intrinsic imbalance of a clique on $\ell$ vertices is $\gamma(\ell)$. We now introduce two related notions. The \emph{total imbalance} and \emph{excess imbalance} of a clique $C$ on $\ell$ vertices with respect to a layout $\sigma$ is given by, respectively: 

$$ \gamma^\star(C,\sigma) = \sum_{v \in C} \mathcal{I}(v,\sigma) \text{ and } \gamma^+(C,\sigma) = \gamma^\star(C,\sigma) - \gamma(\ell). $$

% \begin{definition}
% Let $G$ be a graph and let $S \subseteq V(G)$ be a twin cover of $G$ of size $k$. 
% \end{definition}

We now claim that the excess imbalance of any large clique is a function of its parity, type, and location in the layout $\sigma$, and in particular, is independent of its size.

\begin{lemma}\label{lem:largecliques} 
Let $G$ be a graph and let $S \subseteq V(G)$ be a twin cover of $G$ of size $k$. Further, let $C$ be a large clique in $G \setminus S$ of type $T \subseteq S$. For any layout $\sigma$ we have:

 \begin{equation*}
    \gamma^+(C,\sigma) =
    \begin{cases}
      \lfloor \frac{\delta(C,\sigma)^2}{2} \rfloor & \text{if } C \text{ is an even clique},\\
      \lceil  \frac{\delta(C,\sigma)^2}{2} \rceil & \text{if } C \text{ is an odd clique},\\
    \end{cases}
\end{equation*}

where $\delta(C,\sigma)$ denotes the difference $|N_L(v,\sigma) \cap S| - |N_R(v,\sigma) \cap S|$, for any $v \in C$.
\end{lemma}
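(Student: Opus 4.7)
The plan is to compute the total imbalance $\gamma^\star(C,\sigma)$ vertex-by-vertex in closed form and then subtract $\gamma(\ell)$ to read off the excess. First I would argue that, without loss of generality, the vertices of $C$ appear consecutively in $\sigma$: by (the twin-swap argument underlying) Lemma~\ref{lem:truetwinsstaytogether}, exchanging the positions of two true twins preserves the imbalance of every vertex of $G$, so any such rearrangement leaves both $\gamma^\star(C,\sigma)$ and $\delta(C,\sigma)$ invariant, and incidentally makes $\delta$ unambiguously well-defined. Label the vertices of $C$ in the induced order as $v^1 <_\sigma v^2 <_\sigma \cdots <_\sigma v^\ell$, and set
\[
a := |\{t \in T : t <_\sigma v^1\}|, \qquad b := |\{t \in T : v^\ell <_\sigma t\}|.
\]
Since no element of $T$ lies strictly between two vertices of $C$, we have $a + b = |T|$ and $\delta(C,\sigma) = a - b$.

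Next, I would observe that for each $j \in [\ell]$ the left neighbors of $v^j$ are precisely $\{v^1, \ldots, v^{j-1}\}$ together with the $a$ vertices of $T$ lying to the left of the block, giving $p(v^j,\sigma) = j - 1 + a$; symmetrically, $q(v^j,\sigma) = \ell - j + b$. Hence $\mathcal{I}(v^j,\sigma) = |2j - (\ell+1) + \delta|$, and writing $x_j := 2j - (\ell+1)$,
\[
\gamma^\star(C,\sigma) = \sum_{j=1}^{\ell} |x_j + \delta|.
\]

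The multiset $\{x_1,\ldots,x_\ell\}$ is symmetric about $0$: it equals $\{\pm 1,\pm 3,\ldots,\pm(\ell-1)\}$ when $\ell$ is even, and $\{0,\pm 2,\pm 4,\ldots,\pm(\ell-1)\}$ when $\ell$ is odd. I would pair each positive $x_j$ with its negation $-x_j$ and apply the elementary identity $|x+\delta| + |x-\delta| = 2\max(x,|\delta|)$ for $x \geq 0$. This rewrites the excess imbalance as
\[
\gamma^+(C,\sigma) \;=\; 2 \sum_{x \in X^+} \max\bigl(0,\, |\delta| - x\bigr) \;+\; \varepsilon,
\]
where $X^+ := \{1,3,\ldots,\ell-1\}$ when $\ell$ is even, $X^+ := \{2,4,\ldots,\ell-1\}$ when $\ell$ is odd, and $\varepsilon := |\delta|$ in the odd case and $\varepsilon := 0$ in the even case. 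The hypothesis that $C$ is large enters crucially here: since $|\delta| \leq |T| \leq k < \ell$, the nonzero contributions come only from a proper initial segment of $X^+$, which keeps us in the regime where the intended floor/ceiling closed form holds (for $|\delta|$ much larger than $\ell$ the identity would fail).

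Finally, evaluating the residual sum reduces to summing a short arithmetic progression. I would split into four subcases based on the parities of $\ell$ and $|\delta|$; in each case the sum collapses to a simple quadratic in $|\delta|$, and combining with $\varepsilon$ yields exactly $\lfloor \delta^2/2 \rfloor$ when $\ell$ is even and $\lceil \delta^2/2 \rceil$ when $\ell$ is odd. The main obstacle is bookkeeping through this four-way parity case split and checking that the two floor/ceiling formulas correctly unify all four subcases; the algebra itself is routine, and the largeness of $C$ is what ensures the compact closed form.
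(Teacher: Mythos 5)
Your proof is correct, and at its core it is the same computation as the paper's: reduce to the case where $C$ occupies a consecutive block, write the per-vertex imbalances as $\big|\delta + 2j - |C| - 1\big|$, and evaluate the resulting sum using the largeness bound $|\delta| \leq |T| \leq k < |C|$. The difference is in how that sum is handled. The paper symmetrizes first (reading the clique in reverse to replace $\delta$ by $|\delta|$), sets $\kappa = |C| + |\delta|$, splits the sum at the index $\kappa/2$ where the absolute value changes sign --- largeness guaranteeing $\kappa/2 < |C|$ --- and obtains the closed form $\gamma^{\star}(C,\sigma) = \frac{\delta(C,\sigma)^2 + |C|^2}{2}$ (minus $\tfrac{1}{2}$ when $\kappa$ is odd), from which $\gamma^{+}$ follows by subtracting $\gamma(|C|)$. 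You instead isolate the excess directly by pairing $x_j$ with $-x_j$ and using $|x+\delta| + |x-\delta| = 2\max(x,|\delta|)$, so that $\gamma^{+}(C,\sigma) = 2\sum_{x \in X^{+}} \max(0, |\delta|-x) + \varepsilon$; largeness plays exactly the same role (it prevents truncation of the progression, since $|\delta| \leq k \leq \ell-1$). Your route avoids ever computing $\gamma^{\star}$ in closed form, at the cost of a four-way parity split at the end (which does check out: all four cases yield $\lfloor \delta^2/2 \rfloor$, respectively $\lceil \delta^2/2 \rceil$), whereas the paper needs only the parity of $\kappa$. One small quibble: your ``without loss of generality'' step is better justified by sliding the non-neighbours of $C$ out from between its vertices (which changes no left/right neighbour count of any vertex of $C$, since they are non-adjacent) rather than by twin swaps, as the interleaved vertices need not be twins of the vertices of $C$; note also that $\delta(C,\sigma)$ is only well defined when no vertex of $T$ lies inside the block of $C$, which is the clean-layout setting in which the lemma is applied --- the paper's proof makes the same implicit assumption.
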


\begin{proof} To begin with, note that the total imbalance of $C$ is given by:

$$\gamma^{\star}(C,\sigma)=\sum_{j=1}^{|C|}\Big| \delta(C,\sigma) + 2j - |C| - 1 \Big| = \sum_{j=1}^{|C|}\Big| \delta(C,\sigma) - 2j + |C| + 1 \Big|,$$

where the second expression is obtained by adding the imbalances of the vertices of the clique in reverse order. We consolidate this to:

% 1$$\gamma^{\star}(C,\sigma)=\sum_{j=1}^{|C|}\Big| (|N_L(v,\sigma) \cap S| + (j-1)) - (|N_R(v,\sigma) \cap S| + (|C| - j)) \Big|$$

% $$\gamma^{\star}(C,\sigma)=\sum_{j=1}^{|C|}\Big| (|N_L(v,\sigma) \cap S| - |N_R(v,\sigma) \cap S|) + (j-1) - |C| + j) \Big|$$

% $$\gamma^{\star}(C,\sigma)=\sum_{j=1}^{|C|}\Big| \delta(C,\sigma) + 2j - |C| - 1) \Big|$$

% $$\gamma^{\star}(C,\sigma)=\sum_{j=1}^{|C|}\Big| (|N_L(v,\sigma) \cap S| + (|C| - j)) - (|N_R(v,\sigma) \cap S| + (j-1)) \Big|$$

% $$\gamma^{\star}(C,\sigma)=\sum_{j=1}^{|C|}\Big| (|N_L(v,\sigma) \cap S| - |N_R(v,\sigma) \cap S|) - (j-1) + |C| - j) \Big|$$

% $$\gamma^{\star}(C,\sigma)=\sum_{j=1}^{|C|}\Big| \delta(C,\sigma) - 2j + |C| - 1) \Big|$$

$$\gamma^{\star}(C,\sigma)=\sum_{j=1}^{|C|}\Big| |\delta(C,\sigma)|+|C|+1-2j \Big|$$

Let us use $\kappa(C,\sigma)$ to denote $|C|+|\delta(C,\sigma)|$. If $\kappa(C,\sigma)$ is even:

$$\gamma^{\star}(C,\sigma)=\sum_{j=1}^{\min\left(|C|,\frac{\kappa(C,\sigma)}{2}\right)}\Big(\kappa(C,\sigma)+1-2j\Big) + \sum_{j=\frac{\kappa(C,\sigma)}{2}+1}^{|C|}\Big(2j-(\kappa(C,\sigma)+1)\Big)$$

We now claim that $\kappa(C,\sigma)/2 < |C|$ when $C$ is a large clique. Indeed, this follows from the fact that $\delta(C,\sigma) \leq |S| = k$, and further, $|C| > k$ since $C$ is a large clique. Therefore, the expression above simplifies to:

$$\gamma^{\star}(C,\sigma)=\sum_{j=1}^{\frac{\kappa(C,\sigma)}{2}}\Big(\kappa(C,\sigma)+1-2j\Big) + \sum_{j=\frac{\kappa(C,\sigma)}{2}+1}^{|C|}\Big(2j-(\kappa(C,\sigma)+1)\Big).$$

By a change of variable it's straightforward to check that the terms above evaluate to the following sums: 

% By substituting for $j$ we see that the terms above are sums of the first $q$ odd numbers, where $q$ is the number of terms of the sum in question. Thus we obtain: 

$$\gamma^{\star}(C,\sigma)=\sum_{j=1}^{\frac{\kappa(C,\sigma)}{2}}\left(2j-1\right) +\sum_{j=1}^{\frac{|C|-|\delta(C,\sigma)|}{2}}\left(2j-1\right) = \frac{\delta(C,\sigma)^{2}+|C|^2}{2}.$$

Analogously\footnote{We skip directly to the last step since the argument is identical to the previous case.}, for the case when $\kappa(C,\sigma)$ is odd, we have:

$$\gamma^{\star}(C,\sigma)=\sum_{j=1}^{\frac{\kappa(C,\sigma)-1}{2}}\Big(\kappa(C,\sigma)+1-2j\Big) + \sum_{j=\frac{\kappa(C,\sigma)+1}{2}}^{|C|}\Big(2j-(\kappa(C,\sigma)+1)\Big)$$
$$=\sum_{j=1}^{\frac{\kappa(C,\sigma)-1}{2}}\left(2j\right) +\sum_{j=1}^{\frac{|C|-|\delta(C,\sigma)|-1}{2}}\left(2j\right) = \frac{\delta(C,\sigma)^{2}+|C|^{2}-1}{2}$$
Hence, 
$$\gamma^{+}(C,\sigma)=\left\lfloor\frac{\delta(C,\sigma)^{2}+|C|^2}{2}\right\rfloor - \left\lfloor\frac{|C|^2}{2}\right\rfloor = \begin{cases}
      \lfloor \frac{\delta(C,\sigma)^2}{2} \rfloor & \text{if } C \text{ is an even clique},\\
      \lceil  \frac{\delta(C,\sigma)^2}{2} \rceil & \text{if } C \text{ is an odd clique}.\\
    \end{cases}$$
This concludes the proof.
\end{proof}

\begin{corollary}
\label{cor:largecliques}
Let $G$ be a graph and let $S \subseteq V(G)$ be a twin cover of $G$ of size $k$. Further, let $C_i$ and $C_j$ be large cliques in $G \setminus S$ that have the same parity and type. If $\sigma$ is a layout that places $C_i$ and $C_j$ in the same location, then $\gamma^+(C_i) = \gamma^+(C_j)$.
\end{corollary}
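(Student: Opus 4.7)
The plan is to derive the corollary as an almost immediate consequence of Lemma~\ref{lem:largecliques}. That lemma tells us that for a large clique $C$ of type $T$, the excess imbalance $\gamma^{+}(C,\sigma)$ is determined entirely by the parity of $|C|$ and the quantity $\delta(C,\sigma) = |N_L(v,\sigma) \cap S| - |N_R(v,\sigma) \cap S|$ (for any $v \in C$). So the entire argument reduces to checking that, under the corollary's hypotheses, $\delta(C_i,\sigma) = \delta(C_j,\sigma)$.

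First I would unpack the assumption that $C_i$ and $C_j$ are placed in the same location of $\sigma$. By definition, this means both cliques lie between the same pair of consecutive vertices of $S$ in $\sigma$ (or both lie before the first, or both after the last, vertex of $S$). In particular, the set of twin cover vertices appearing to the left of $C_i$ in $\sigma$ is exactly the same as the set appearing to the left of $C_j$, and analogously for the right.

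Next I would use the fact that $C_i$ and $C_j$ have the same type $T \subseteq S$. This means $N(u) \cap S = T$ for every $u \in C_i$ and every $u \in C_j$. Consequently, for any vertex $v_i \in C_i$ and $v_j \in C_j$, we have
\[
N_L(v_i,\sigma) \cap S = T \cap \{s \in S : s <_\sigma C_i\} = T \cap \{s \in S : s <_\sigma C_j\} = N_L(v_j,\sigma) \cap S,
\]
and symmetrically for the right neighborhood. Hence $\delta(C_i,\sigma) = \delta(C_j,\sigma)$.

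Finally, applying Lemma~\ref{lem:largecliques} to both cliques and using that they share the same parity, the two closed-form expressions for $\gamma^{+}(C_i,\sigma)$ and $\gamma^{+}(C_j,\sigma)$ coincide, giving the desired equality. I do not foresee a genuine obstacle here; the only mild subtlety is stating carefully what ``same location'' means so that the identification of the twin cover vertices flanking each clique is unambiguous, which is already set up in the preceding discussion of locations.
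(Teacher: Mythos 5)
Your proposal is correct and follows exactly the paper's argument: same location plus same type forces $\delta(C_i,\sigma)=\delta(C_j,\sigma)$, and then Lemma~\ref{lem:largecliques} together with equal parity yields $\gamma^+(C_i,\sigma)=\gamma^+(C_j,\sigma)$. You merely spell out the ``same location implies same flanking twin cover vertices'' step, which the paper leaves implicit.
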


\begin{proof}
The claim follows from the fact that $\delta(C_i,\sigma) = \delta(C_j,\sigma)$ when $C_i$ and $C_j$ are both large cliques with the same type and that share the same location in the layout $\sigma$. Further, $\gamma^+(C_i,\sigma) = \gamma^+(C_j,\sigma)$ since $C_i$ and $C_j$ are given to have the same parity. 
\end{proof}

Let us now formalize the notion of 'location' in an ordering. Let $\sigma$ be an arbitrary but fixed ordering. We say that a vertex $v$ in $G\setminus S$ is placed at the \emph{location} $|\sigma_{<v}\cap S| +1$, where $\sigma_{<v} := \{ w ~|~ w <_{\sigma} v\}$. If $\sigma$ is a clean ordering, we also say that a clique $C\in G\setminus S$ is placed at the \emph{location} $|\sigma_{<v}\cap S| +1$, where $v$ is an arbitrarily chosen vertex of $C$. Note that since $\sigma$ is a clean ordering, for any clique  $C\in G\setminus S$, all its vertices are placed at the same location. Therefore the notion of the location of a clique is well-defined. Intuitively, the location of a clique tells us where it lies in the layout relative to the twin cover vertices. In particular, cliques that are placed at location $1 < i \leq k$ lie between the $(i-1)^{th}$ and the $i^{th}$ twin cover vertex; with cliques at locations $1$ and $k+1$ being placed to the left of the first twin cover vertex and to the right of the last twin cover vertex, respectively. 

Let $\mathcal{C} :=\Big( 2^S \times \{0\} \times [k] \Big) \cup \Big( 2^S \times \{1\} \times \{e,o\} \Big)$. A \emph{class} is a triplet $(T,b,j) \in \mathcal{C}$. Recall that the type of a clique $C$ is given by $N(C) \cap S$. Let $T \subseteq S$ be arbitrary but fixed, and let $C$ be a clique of type $T$. Then, the class of the clique $C$ is given by:

\begin{itemize}
    \item $(T,1,o)$ if $C$ is a large odd clique.
    \item $(T,1,e)$ if $C$ is a large even clique.
    \item $(T,0,j)$ if $C$ is a small clique on $j$ vertices.
\end{itemize}

We will typically use $\nu$ to denote an element of $\mathcal{C}$. 

We now turn to the notion of specifications, which capture the ``demand'' that we may make for the number and the total sizes of the cliques of a particular class at a particular location. Formally, a specification is a map from $\mathcal{C} \times [k+1]$ to $[n] \cup \{0\}$. We relate specifications to layouts in the following definitions.

\begin{itemize}
\item Given a specification $\alpha$, we say that a layout $\sigma$ \emph{respects $\alpha$ in count} if, for each location $j \in [k+1]$ and for every class $\nu \in \mathcal{C}$, the number of cliques of class $\nu$ in location $j$ according to $\sigma$ is $\alpha(\nu,j)$. 
\item Given a specification $\beta$, we say that a layout $\sigma$ \emph{respects $\beta$ in size} if, for each location $j \in [k+1]$ and for every class $\nu \in \mathcal{C}$, the total size of cliques of class $\nu$ in location $j$ according to $\sigma$ is $\beta(\nu,j)$. 
\end{itemize}

We say that two layouts $\sigma$ and $\pi$ are \emph{similar with respect to $S$} if the layouts are identical when projected on the vertices of $S$. Our first observation is that the notion of specifications is sufficiently rich in the context of imbalance in the following sense: for an arbitrary but fixed pair of specifications $(\alpha,\beta)$, all similar layouts that respect $\alpha$ in size and $\beta$ in count have the same imbalance. We formalize this claim below.

\begin{lemma}
\label{lem:imbspecs}
Let $G$ be a graph and let $S$ be a twin cover of $G$ of size $k$. Also, let $\alpha$ and $\beta$ be two specification functions for $G$ and let $\sigma$ and $\pi$ be two clean layouts of $G$ that are similar with respect to $S$. If $\sigma$ and $\pi$ both respect $\alpha$ in count and $\beta$ in size, then $\mathcal{I}(\sigma) = \mathcal{I}(\pi)$. 
\end{lemma}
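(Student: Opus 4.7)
The plan is to decompose $\mathcal{I}(\sigma)$ into the imbalances of the twin cover vertices and the total imbalance of the vertices lying in cliques of $G \setminus S$, and then to argue that each piece is separately determined by the shared data — the common $S$-ordering together with $\alpha$ and $\beta$ — so each piece must coincide for $\sigma$ and $\pi$.

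For a twin cover vertex $v$ that appears at position $i$ in the common $S$-ordering, similarity on $S$ already matches the left/right split of $N(v) \cap S$ between the two layouts. The remaining neighbors of $v$ come from cliques whose type $T$ contains $v$; since both layouts are clean, every vertex of such a clique at location $L$ lies to the left of $v$ when $L \le i$ and to the right otherwise. Aggregating by class, the number of left-neighbors of $v$ outside $S$ equals $\sum_{T \ni v}\sum_{\nu=(T,\cdot,\cdot)}\sum_{L \le i}\beta(\nu,L)$, which is the same for $\sigma$ and $\pi$ because both respect $\beta$ in size; the right side is analogous. Hence $\mathcal{I}(v,\sigma) = \mathcal{I}(v,\pi)$ for every $v \in S$.

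For the clique vertices, I would write
\[
\sum_{C \in G \setminus S} \gamma^\star(C,\sigma) \;=\; \sum_{C} \gamma(|C|) \;+\; \sum_{C} \gamma^+(C,\sigma),
\]
so that the first summand depends only on $G$. For the second, the identity $\gamma^\star(C,\sigma) = \sum_{j=1}^{|C|}\bigl|\delta(C,\sigma)+2j-|C|-1\bigr|$ from the proof of Lemma~\ref{lem:largecliques} shows that $\gamma^\star(C,\sigma)$ depends only on $|C|$ and $\delta(C,\sigma)$, and $\delta(C,\sigma)$ is pinned down by the type of $C$, its location, and the shared $S$-ordering. For a small class $(T,0,j)$ the size is fixed by the class itself, so $\gamma^+$ is a function of class and location alone; for a large class, Corollary~\ref{cor:largecliques} furnishes the same conclusion even though the individual sizes within a class may differ. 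Summing class-by-class and location-by-location therefore produces $\sum_{C}\gamma^+(C,\sigma) = \sum_{\nu,L}\alpha(\nu,L)\cdot\gamma^+_\nu(L)$, which is identical for $\sigma$ and $\pi$ because both respect $\alpha$ in count.

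Adding the two contributions yields $\mathcal{I}(\sigma) = \mathcal{I}(\pi)$. The main obstacle to anticipate is the large-clique case, where cliques of the same class at the same location may genuinely have different individual sizes in $\sigma$ versus $\pi$; this is exactly why the decomposition $\gamma^\star = \gamma + \gamma^+$ is crucial, letting the layout-independent intrinsic piece absorb the size variation while Corollary~\ref{cor:largecliques} controls the excess. Everything else reduces to careful bookkeeping over classes and locations.
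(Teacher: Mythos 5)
Your proposal is correct and follows essentially the same route as the paper's proof: you decompose the imbalance into the twin cover vertices (whose imbalance is pinned down by the shared $S$-ordering together with $\beta$) and the clique vertices (whose contribution is the layout-independent intrinsic sum plus a per-class, per-location excess determined by Lemma~\ref{lem:largecliques} and weighted by $\alpha$), which is exactly the paper's accounting via its $\theta$ functions. The only cosmetic difference is that the paper writes these class/location quantities out explicitly, while you cite Corollary~\ref{cor:largecliques} where the cross-layout statement really rests on the formula of Lemma~\ref{lem:largecliques} itself, which you also invoke.
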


\begin{proof}
Label the vertices of $S$ in the order of their appearance in $\sigma$ as $s_1 < \ldots <s_k$. Note that since $\sigma$ and $\pi$ are similar, this is also the order of the vertices of $S$ in $\pi$. Let $S_0:=\phi$ and for every $i$ in $[k]$, let $S_i:=S_{i-1} \cup \{s_i\}$. Also, let:

$$\mathcal{L}:=\{C ~|~ C \text{ is a large clique in } G\setminus S\}.$$

For every $T\subseteq S$, $p\in [k]$, $q\in [k+1]$, we have the following. 

\begin{itemize}
    \item The imbalance of a clique $C$ of class $(T,0,p)$ (i.e, a small clique $C$ of type $T$ and size $p$) placed at location $q$ is given by:
    
    $$\gamma^{*}(C)=\sum_{j=1}^{p} \Big||T\cap (S\setminus S_{q-1})|-|T\cap S_{q-1}|+(p+1)-2j \Big| =: \theta_{(T,0,p)}(q)$$

    \item By Lemma~\ref{lem:largecliques}, the \emph{excess} imbalance of a clique $C$ of class $(T,1,e)$ (i.e, a large even clique $C$ of type $T$) placed at location $q$ is given by:
    
    $$\gamma^{+}(C)=\left\lfloor\frac{\Big(|T\cap(S\setminus S_{q-1})|-|T\cap S_{q-1}|\Big)^2}{2}\right\rfloor =: \theta_{(T,1,e)}(q)$$

    \item By Lemma~\ref{lem:largecliques}, the \emph{excess} imbalance of a clique $C$ of class $(T,1,o)$ (i.e, a large odd clique $C$ of type $T$) placed at location $q$ is given by:
    
    $$\gamma^{+}(C)=\left\lceil\frac{\Big(|T\cap(S\setminus S_{q-1})|-|T\cap S_{q-1}|\Big)^2}{2}\right\rceil =: \theta_{(T,1,o)}(q)$$
\end{itemize}
For every $i\in [k]$, imbalance of twin cover vertex $s_i$ is given by:
$$\mathcal{I}(s_i)=\left||N(s_i)\cap(S\setminus S_i)|-|N(s_i)\cap S_i|+\sum_{\substack{\nu = (T,\cdot,\cdot) \in \mathcal{C} \\ s_i \in T}} \Big(\sum_{q=i+1}^{k+1}\beta(\nu,q) - \sum_{q=1}^{i}\beta(\nu,q) \Big) \right|.$$

Hence, the imbalance of \emph{any} layout that respects $\alpha$ in count and $\beta$ in size, and orders the vertices of the twin cover as stated in the beginning is given by:
$$\sum_{i=1}^{k}\mathcal{I}(s_i) + \sum_{T\in 2^S}\sum_{q=1}^{k+1}\sum_{p=1}^{k}\Big(\alpha((T,0,p),q)\cdot\theta_{(T,0,p)}(q)\Big) + $$
$$\sum_{C\in \mathcal{L}}\gamma(|C|) +\sum_{T\in 2^S}\sum_{q=1}^{k+1}\Big(\alpha((T,1,e),q)\cdot\theta_{(T,1,e)}(q) + \alpha((T,1,o),q)\cdot\theta_{(T,1,o)}(q)\Big).$$

Since these criteria apply to both $\sigma$ and $\pi$, it follows that they have the same imbalance and their imbalance is given by the expression above. 
\end{proof}

Based on the proof of Lemma~\ref{lem:imbspecs}, we have the following lemma.

\begin{lemma}
\label{lem:compversion}
Let $G$ be a graph and let $S$ be a twin cover of $G$ of size $k$. Also, let $\alpha$ and $\beta$ be two specification functions for $G$. Let $\pi$ be an ordering of the vertices of $S$. For any layout $\sigma$ of the vertices of $G$ that respects $\alpha$ in count and $\beta$ in size, and which is consistent with $\pi$ when restricted to $S$, its imbalance can be computed in time $O(g(k)\cdot k \cdot n^{O(1)})$.
\end{lemma}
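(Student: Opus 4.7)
The plan is to carry out the proof by directly evaluating the closed-form expression for imbalance that was derived in the proof of Lemma~\ref{lem:imbspecs}. Recall that under the hypothesis, the imbalance of $\sigma$ decomposes into four pieces: the twin cover part $\sum_{i=1}^{k}\mathcal{I}(s_i)$; the small-clique part $\sum_{T,q,p}\alpha((T,0,p),q)\cdot\theta_{(T,0,p)}(q)$; the intrinsic part $\sum_{C\in\mathcal{L}}\gamma(|C|)$; and the large-clique excess part $\sum_{T,q}\bigl(\alpha((T,1,e),q)\cdot\theta_{(T,1,e)}(q)+\alpha((T,1,o),q)\cdot\theta_{(T,1,o)}(q)\bigr)$. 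The approach is simply to evaluate each of these four pieces in turn and verify that the total work fits the claimed budget.

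First, I would make one pass over the clique list $C_1,\ldots,C_r$ to accumulate $\sum_{C\in\mathcal{L}}\gamma(|C|)$ in $O(n)$ time. Next, I would precompute a table of values $\theta_\nu(q)$ indexed by class $\nu\in\mathcal{C}$ and location $q\in[k+1]$; each entry is evaluated in $O(k)$ time using the explicit formulas from the proof of Lemma~\ref{lem:imbspecs}, which only depend on $|T\cap S_{q-1}|$, $|T\setminus S_{q-1}|$, and the class parameters. Since $|\mathcal{C}|=O(2^k\cdot k)$, this precomputation costs $O(2^k\cdot k^3)$.

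With these tables in hand, the small-clique sum and the large-clique excess sum are obtained by straightforward lookup in $O(|\mathcal{C}|\cdot k)=O(2^k\cdot k^2)$ time. For the twin cover piece, I would iterate over $i\in[k]$: for each $s_i$ the quantity $|N(s_i)\cap(S\setminus S_i)|-|N(s_i)\cap S_i|$ is read from the adjacency structure of $S$ in $O(k)$ time, and the specification contribution, namely the sum over classes $\nu=(T,\cdot,\cdot)$ with $s_i\in T$ of $\bigl(\sum_{q>i}\beta(\nu,q)-\sum_{q\leq i}\beta(\nu,q)\bigr)$, is computed by scanning the $\beta$-table in $O(2^k\cdot k^2)$ time; summed over the $k$ choices of $i$, this contributes $O(2^k\cdot k^3)$.

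Adding the four pieces yields a total running time of $O(2^k\cdot k^3+n)$, which is of the claimed form $O(g(k)\cdot k\cdot n^{O(1)})$ by taking $g(k):=2^k\cdot k^2$. Correctness is immediate from Lemma~\ref{lem:imbspecs}, so no substantive mathematical obstacle remains; the only care required is to build the $\theta$-table once so that each subsequent lookup is $O(1)$, and to scan the specification tables in the prescribed order so that no $n$-factor creeps in where a $k$-factor suffices.
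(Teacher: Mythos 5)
Your proposal is correct and takes essentially the same route as the paper: evaluate the closed-form four-part expression for the imbalance established in the proof of Lemma~\ref{lem:imbspecs} (twin cover vertices, small cliques, intrinsic imbalance of large cliques, and large-clique excesses via the $\theta$-tables). In fact you spell out the bookkeeping of the running time more explicitly than the paper, which simply asserts that the expression is computable within the stated bound.
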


\begin{proof}
Given $\alpha$, $\beta$ and $\pi$, the lemma follows from the fact that the imbalance of any layout $\sigma$ of the vertices of $G$ that respects $\alpha$ in count and $\beta$ in size, and which is consistent with $\pi$ when restricted to $S$ is given by:
$$\sum_{i=1}^{k}h(i) + \sum_{T\in 2^S}\sum_{q=1}^{k+1}\sum_{p=1}^{k}\Big(\alpha((T,0,p),q)\cdot\theta_{(T,0,p)}(q)\Big) + $$
$$\sum_{C\in \mathcal{L}}\gamma(|C|) +\sum_{T\in 2^S}\sum_{q=1}^{k+1}\Big(\alpha((T,1,e),q)\cdot\theta_{(T,1,e)}(q) + \alpha((T,1,o),q)\cdot\theta_{(T,1,o)}(q)\Big),$$
where:
\begin{itemize}
\item $S_0:=\phi$, and for each $i\in[k], S_i=S_{i-1}\cup \{\pi(i)\}$
\item for each $T\in 2^{S}, p\in[k],q\in[k+1]$,
$$\theta_{(T,0,p)}(q) := \sum_{j=1}^{p} \Big||T\cap (S\setminus S_{q-1})|-|T\cap S_{q-1}|+(p+1)-2j \Big|$$
\item for each $T\in 2^{S}, q\in[k+1]$,
$$\theta_{(T,1,e)}(q) := \left\lfloor\frac{\Big(|T\cap(S\setminus S_{q-1})|-|T\cap S_{q-1}|\Big)^2}{2}\right\rfloor$$
$$\theta_{(T,1,o)}(q) := \left\lceil\frac{\Big(|T\cap(S\setminus S_{q-1})|-|T\cap S_{q-1}|\Big)^2}{2}\right\rceil$$
\item for each $i\in[k]$,
$$h(i) := \left||N(\pi(i))\cap(S\setminus S_i)|-|N(\pi(i))\cap S_i|+\sum_{\substack{\nu = (T,\cdot,\cdot) \in \mathcal{C} \\ \pi(i) \in T}} \Big(\sum_{q=i+1}^{k+1}\beta(\nu,q) - \sum_{q=1}^{i}\beta(\nu,q) \Big) \right|.$$
\end{itemize}
It is easy to check that the expression above can be computed in time $O(g(k)\cdot k \cdot n^{O(1)})$. 
\end{proof}

Our next claim is that the number of specification functions is bounded as a function that is XP in $k$. More specifically, we have the following. 

\begin{proposition}
Let $G$ be a graph and let $S$ be a twin cover of $G$ of size $k$. Then, the number of specification functions is bounded by $(n+1)^{g(k)}$, where $g(k) = (2^k \cdot (k+2)) \times (k+1)$.
\end{proposition}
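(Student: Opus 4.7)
The plan is a direct counting argument using the definition of a specification function. A specification is, by definition, an arbitrary function $\mathcal{C} \times [k+1] \to [n] \cup \{0\}$, so to bound the number of such functions it suffices to bound the size of the domain and note that the codomain has $n+1$ elements.

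First, I would compute $|\mathcal{C}|$. Recall that
\[
\mathcal{C} = \Bigl(2^S \times \{0\} \times [k]\Bigr) \,\cup\, \Bigl(2^S \times \{1\} \times \{e,o\}\Bigr),
\]
where the two parts are disjoint since their second coordinates differ. Since $|2^S| = 2^k$, the first part has size $2^k \cdot k$ and the second part has size $2^k \cdot 2$, so
\[
|\mathcal{C}| = 2^k \cdot k + 2^k \cdot 2 = 2^k (k+2).
\]
The domain of any specification function is therefore $\mathcal{C} \times [k+1]$, whose cardinality is
\[
|\mathcal{C}| \cdot (k+1) = 2^k (k+2)(k+1) = g(k).
\]

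Next, the codomain $[n] \cup \{0\} = \{0, 1, \ldots, n\}$ has exactly $n+1$ elements. Since a specification is merely an arbitrary function from a set of size $g(k)$ into a set of size $n+1$ (no further constraints are imposed by the definition), the total number of specification functions is at most $(n+1)^{g(k)}$, as claimed.

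There is no real obstacle here; the argument is pure book-keeping. The only point that warrants a moment of care is the disjointness of the two parts of $\mathcal{C}$, which ensures the $+$ (rather than double-counting) when summing $2^k \cdot k$ and $2^k \cdot 2$. Note also that the bound is an upper bound only: many specification functions in this count will not be realizable by any actual layout of $G$ (for instance because they would demand more cliques of a given class than $G$ contains), but this slack is harmless for the XP running-time analysis that follows.
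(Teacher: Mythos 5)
Your proposal is correct and follows essentially the same argument as the paper: count the classes as $2^k(k+2)$, multiply by the $k+1$ locations to get the domain size $g(k)$, and bound the number of functions into a codomain of size $n+1$ by $(n+1)^{g(k)}$. Your version simply spells out the book-keeping (disjointness of the two parts of $\mathcal{C}$, the upper-bound nature of the count) in more detail than the paper does.
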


\begin{proof}
This follows from the fact that the number of possible classes is at most $(2^k \cdot (k+2))$, and that the number of functions from a domain with $a$ elements to a range with $b$ elements is $b^a$.
\end{proof}

We are now finally ready to present our dynamic programming algorithm. For any pair of specifications $(\alpha,\beta)$, we say that a layout respects $(\alpha,\beta)$ if it respects $\alpha$ in count and $\beta$ in size. Recall that the cliques of $G \setminus S$ were denoted by $C_1, \ldots, C_r$. For $j \in [r]$, let $H_j$ denote the graph $G[S \cup C_1\cup \ldots\cup C_j]$. Now consider the following DP table, where $\alpha$ and $\beta$ are specifications and $q \in [r]$:

 \begin{equation*}
    \mathbb{T}(\alpha,\beta,q)  =
    \begin{cases}
      1 & \text{if there exists a layout } \sigma \text{ of } H_q \text{ that respects } (\alpha,\beta),\\
      0 & \text{otherwise}.
    \end{cases}
\end{equation*}

Before describing the recurrence for $\mathbb{T}_\pi(\alpha,\beta,q)$, we informally allude to why this is useful to compute. To check if $G$ admits a layout of imbalance at most $t$, our algorithm proceeds as follows. For all specification pairs $(\alpha,\beta)$, we check if  $\mathbb{T}(\alpha,\beta,r) = 1$. For all the instances where the entries are one, we compute the imbalance of any layout that respects $(\alpha,\beta)$ based on Lemma~\ref{lem:imbspecs}, by trying all possible choices for the ordering of twin cover vertices. If we ever encounter an imbalance value that is at most $t$ then we abort and return \textsc{Yes}, otherwise we return \textsc{No} after all choices of $\pi$ and the corresponding specification pairs have been exhaustively examined. 

We now turn to the computation of the DP table $\mathbb{T}$. For the base case, we have $q = 1$, and it is easy to see that there are exactly $(k+1)$ choices --- one for each possible location --- of pairs of specifications $(\alpha,\beta)$ for which $\mathbb{T}_\pi(\alpha,\beta,q) = 1$. For the sake of exposition, we explicitly describe these pairs. Recall that the size of $C_1$ is given by $\ell_1$, suppose the class of $C_1$ is $\mathfrak{C}^\star$. Then consider the specification function pairs $(\alpha_i,\beta_i)_{i \in [k+1]}$ defined as follows:

 \begin{equation*}
    \alpha_i(\mathfrak{C},j) =
    \begin{cases}
      1 & \text{if } j = i \text{ and } \mathfrak{C} = \mathfrak{C}^\star,\\
      0 & \text{otherwise},
    \end{cases}
\end{equation*}

and

 \begin{equation*}
    \beta_i(\mathfrak{C},j) =
    \begin{cases}
      \ell_1 & \text{if } j = i \text{ and } \mathfrak{C} = \mathfrak{C}^\star,\\
      0 & \text{otherwise}.
    \end{cases}
\end{equation*}

This motivates the definition of $\mathbb{T}$ for the base case:

\begin{equation*}
    \mathbb{T}(\alpha,\beta,1)  =
    \begin{cases}
      1 & \text{if there exists } i \in [k+1] \text{ such that } \\
      ~ & \alpha = \alpha_i \text{ and } \beta = \beta_i \\
      0 & \text{otherwise}.
    \end{cases}
\end{equation*}

Before proceeding to the recurrence, let us introduce a definition that will make the recurrence simpler to describe. We say that a clique $C$ of size $\ell$ whose class is $\mathfrak{C}$ is an \emph{overfit} for a location $j \in [k+1]$ with respect to the pair of specifications $(\alpha,\beta)$ if:

\begin{itemize}
    \item $\alpha(\mathfrak{C},j) = 0$, i.e, there is no ``demand'' for a clique of class $\mathfrak{C}$ at location $j$; \emph{or}
    \item $\beta(\mathfrak{C},j) < \ell$, i.e, the total sizes of the cliques of class $\mathfrak{C}$ that are expected at location $j$ is smaller than the size of $C$.
\end{itemize}

We now turn to the recurrence for $\mathbb{T}(\alpha,\beta,q)$ for some $q \in [r]$. Let the class of the clique $C_q$ be $\mathfrak{C}^\star$. Define the following auxiliary specifications for $i \in [k+1]$, which, intuitively speaking, capture the subproblems of interest if the clique $C_q$ were to be placed at location $i$:

 \begin{equation*}
    \alpha_i(\mathfrak{C},j) =
    \begin{cases}
      \alpha(\mathfrak{C},j)-1 & \text{if } j = i \text{ and } \mathfrak{C} = \mathfrak{C}^\star,\\
      \alpha(\mathfrak{C},j) & \text{otherwise},
    \end{cases}
\end{equation*}

and

 \begin{equation*}
    \beta_i(\mathfrak{C},j) =
    \begin{cases}
      \beta_i(\mathfrak{C},j) - \ell_q & \text{if } j = i \text{ and } \mathfrak{C} = \mathfrak{C}^\star,\\
      \beta_i(\mathfrak{C},j) & \text{otherwise}.
    \end{cases}
\end{equation*}

Let $B \subseteq [k+1]$ be the set of all locations for which $C_q$ is \emph{not} an overfit with respect to $(\alpha,\beta)$. Then, we have:

$$ \mathbb{T}(\alpha,\beta,q) = \lor_{i \in B} \mathbb{T}(\alpha_i,\beta_i,q-1).$$

The discussions above lead us to the main result of this section.

\begin{theorem}
\label{thm:xp}
\textsc{Imbalance} is in XP when parameterized by twin cover. 
\end{theorem}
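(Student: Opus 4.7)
The plan is to complete the algorithm already outlined: compute the DP table $\mathbb{T}$ in its entirety, then use Lemma~\ref{lem:compversion} to evaluate, for every triple $(\alpha,\beta,\pi)$ consisting of a specification pair with $\mathbb{T}(\alpha,\beta,r)=1$ and an ordering $\pi$ of $S$, the imbalance of any clean layout consistent with it. Lemma~\ref{lem:truetwinsstaytogether} ensures a clean imbalance optimal ordering exists, so restricting attention to clean layouts loses nothing; Lemma~\ref{lem:imbspecs} ensures that a triple $(\alpha,\beta,\pi)$ determines the imbalance of every clean layout consistent with it. Hence enumerating over all such triples examines every imbalance value realizable by a clean layout, and the instance is a \textsc{Yes} instance iff some triple yields a value of at most $t$.

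First I would establish correctness of $\mathbb{T}$ by induction on $q$, proving that $\mathbb{T}(\alpha,\beta,q)=1$ iff there is a clean layout of $H_q$ that respects $(\alpha,\beta)$. The base case $q=1$ follows from the observation that any clean layout of $H_1$ is fully specified, up to internal orderings of $S$ and $C_1$, by the choice of location for $C_1$, and the $k+1$ pairs listed in the base case definition are exactly those realized by such layouts. For the inductive step, if a clean layout $\sigma$ of $H_q$ respects $(\alpha,\beta)$ and places $C_q$ at location $i$, then $\alpha(\mathfrak{C}^\star,i)\geq 1$ and $\beta(\mathfrak{C}^\star,i)\geq \ell_q$ (so $C_q$ is not an overfit at $i$), and the restriction of $\sigma$ to $H_{q-1}$ is a clean layout respecting $(\alpha_i,\beta_i)$. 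Conversely, if there is a non-overfit location $i$ with $\mathbb{T}(\alpha_i,\beta_i,q-1)=1$, a witnessing clean layout of $H_{q-1}$ can be extended by inserting $C_q$ at location $i$ to produce a clean layout of $H_q$ respecting $(\alpha,\beta)$.

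For the running time, the proposition bounding specification counts gives at most $(n+1)^{g(k)}$ specifications, so the DP table has $r\cdot(n+1)^{2g(k)}$ entries, each computed from at most $k+1$ previous entries in polynomial time (the overfit check is trivial). The subsequent enumeration examines at most $(n+1)^{2g(k)}\cdot k!$ triples $(\alpha,\beta,\pi)$, each requiring $O(g(k)\cdot k\cdot n^{O(1)})$ time by Lemma~\ref{lem:compversion}. The overall running time is $(n+1)^{O(g(k))}\cdot n^{O(1)}$, which is of the form $n^{f(k)}$ for a function $f$ depending only on $k$: precisely the XP bound.

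The most delicate point I anticipate is the converse direction of the inductive step. Inserting $C_q$ at location $i$ into a clean layout witnessing $\mathbb{T}(\alpha_i,\beta_i,q-1)$ must yield a layout that respects $(\alpha,\beta)$ in both count \emph{and} size. The count requirement follows immediately from how $\alpha_i$ is defined, but the size requirement needs that the total size of cliques of class $\mathfrak{C}^\star$ at location $i$ in the extended layout equals $\beta(\mathfrak{C}^\star,i)$. This is precisely where tracking $\beta$ alongside $\alpha$ is indispensable, since large cliques of the same class may have different sizes and the count alone does not determine the total size. Once this bookkeeping is confirmed, the remainder of the proof is routine.
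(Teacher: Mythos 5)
Your proposal is correct and follows essentially the same route as the paper: compute the table $\mathbb{T}$, enumerate specification pairs with $\mathbb{T}(\alpha,\beta,r)=1$ together with orderings $\pi$ of $S$, and evaluate the resulting imbalance via Lemma~\ref{lem:compversion}; you additionally spell out the inductive verification of the recurrence, which the paper treats as immediate from the definitions. The only step left implicit is that a layout witnessing $\mathbb{T}(\alpha,\beta,r)=1$ can be made consistent with any chosen $\pi$ by permuting the twin-cover vertices within their positions (which does not change the locations of the cliques, hence preserves both specifications), which is exactly the rearrangement observation the paper makes in its soundness argument.
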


\begin{proof}
Our algorithm begins by computing $\mathbb{T}$ as described above. For every pair of specification functions $(\alpha,\beta)$ such that $\mathbb{T}(\alpha,\beta,r) = 1$, we guess a layout $\pi$ of the vertices of the twin cover $S$ and use Lemma~\ref{lem:imbspecs} to compute the imbalance of any layout which respects $(\alpha,\beta)$ and is consistent with $\pi$ when restricted to $S$. Observe that the DP table has $((n+1)^{2g(k)} \cdot r)$ entries, each of which require $O(k)$ table lookups to be computed. Therefore, the total running time of this approach, ignoring factors polynomial in $n$, is given by $O^\star(k! \cdot (n+1)^{2g(k)} \cdot r \cdot k \cdot f(k))$, where $f(k)$ is the time required to compute the expression given by Lemma~\ref{lem:compversion}. 

It only remains to establish the correctness of this approach. The correctness of the recurrence defining $\mathbb{T}$ is an immediate consequence of the definitions. We now turn to the correctness of the overall algorithm. 

On the one hand, if the input is a \textsc{Yes} instance, then there exists a clean ordering of $V(G)$, say $\sigma$, whose imbalance is at most $t$. Let $\pi$ be the ordering obtained from $\sigma$ restricted to the twin cover vertices, and let $(\alpha,\beta)$ be the unique pair of specification functions that $\sigma$ respects. Note that $\mathbb{T}(\alpha,\beta,r) = 1$, and the imbalance given by Lemma~\ref{lem:compversion} when applied with $(\alpha,\beta)$ as the choice of specification functions and $\pi$ as the relative order on $S$ is exactly the imbalance of $\sigma$. Therefore, our algorithm outputs \textsc{Yes}. 

Conversely, if the output of our algorithm is positive, then there is a choice of $\pi$, an ordering of the twin cover vertices, and pair of specification functions $(\alpha,\beta)$, for which:

\begin{itemize}
    \item The entry $\mathbb{T}(\alpha,\beta,r) = 1$, and,
    \item the imbalance as given by Lemma~\ref{lem:imbspecs} when applied with $(\alpha,\beta)$ as the choice of specification functions and $\pi$ as the relative order on $S$ is at most $t$.
\end{itemize}

By the semantics of $\mathbb{T}$ and the correctness of the recurrence, we know that $\mathbb{T}(\alpha,\beta,r) = 1$ implies the existence of a layout $\sigma$ of $V(G)$ that respects $(\alpha,\beta)$. We rearrange the vertices of $S$ so that the final layout is consistent with $\pi$. Observe that Lemma~\ref{lem:compversion} guarantees that the imbalance of the modified layout is at most $t$, and this concludes our argument. 
\end{proof}

To conclude this section, we also remark that \textsc{Imbalance} is in NP for succinct inputs.

\begin{proposition}
\label{prop:np}
\textsc{Imbalance} is in NP for succinct inputs. 
\end{proposition}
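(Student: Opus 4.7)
The plan is to exhibit a polynomial-size certificate and a deterministic polynomial-time verifier, generalising the $k{=}1$-argument from Section~\ref{sec:partition}. The essential ingredients are already in place: Lemma~\ref{lem:truetwinsstaytogether} (with its twin-cover corollary) guarantees a clean optimal ordering, and Lemma~\ref{lem:imbspecs} says that the imbalance of any clean ordering is completely determined by the relative order $\pi$ of the twin cover vertices together with the assignment of each clique of $G\setminus S$ to one of the $k+1$ locations between consecutive $S$-vertices. So the natural witness is a pair $(\pi,\lambda)$, where $\pi$ is a permutation of $S$ and $\lambda : [r]\to [k+1]$ sends each clique $C_i$ to its location; the encoding length is $O((k+r)\log k)$, which is polynomial in the succinct input.

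The verifier computes the imbalance of the layout described by $(\pi,\lambda)$ without materialising $G$, processing each clique and each twin cover vertex in turn. For a clique $C_i$ of type $S_i$, size $\ell_i$, and location $q := \lambda(i)$, it computes $\gamma^\star(C_i)$ directly: when $\ell_i > k$ it uses the closed-form expression of Lemma~\ref{lem:largecliques} together with $\gamma(\ell_i)$, and when $\ell_i \leq k$ it evaluates $\theta_{(S_i,0,\ell_i)}(q)$ term-by-term in $O(k)$ steps. For each $s_j \in S$ it computes $\mathcal{I}(s_j)$ by reading off the $H$-neighbours of $s_j$ in $S$ and scanning the cliques incident to $s_j$ once, comparing $\lambda(i)$ with the position of $s_j$ in $\pi$ to fix the sign of each clique's contribution. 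Summing over all cliques and all twin cover vertices yields the exact imbalance, by Lemma~\ref{lem:imbspecs}, and the verifier accepts iff this value is at most $t$. All arithmetic is on integers whose bit-length is polynomial in the input, since clique sizes appear linearly rather than as exponents.

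The main subtlety --- and in my view the only conceptual obstacle --- is that the sums in Lemma~\ref{lem:imbspecs} are formally indexed over the class space $\mathcal{C}$ of cardinality $2^k(k+2)$, which may be super-polynomial in the succinct input. The workaround is to re-index: rather than iterating over all classes $\nu\in\mathcal{C}$ and reading $\alpha,\beta$ at $\nu$, the verifier iterates over the $r$ cliques (and the $k$ twin cover vertices), accumulating contributions only for classes that are actually realised. Completeness and soundness then follow routinely: a \textsc{Yes}-instance supplies the pair $(\pi,\lambda)$ induced by any clean optimal ordering, and any accepting certificate is realised by an explicit clean layout of $G$ whose true imbalance --- equal to the verifier's output by Lemma~\ref{lem:imbspecs} --- is at most $t$.
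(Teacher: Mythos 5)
Your proposal is correct and follows essentially the same route as the paper's proof: the certificate is the relative order of the twin cover vertices together with the location of each clique, and the verifier computes the imbalance clique-by-clique (small cliques term-by-term, large cliques via the closed form of Lemma~\ref{lem:largecliques} plus $\gamma(\ell_i)$) and vertex-by-vertex, then compares with $t$, with completeness from the existence of a clean optimal ordering. The re-indexing over realised cliques rather than all classes is exactly how the paper's verification is organised as well.
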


\begin{proof}
We argued the correctness of above claim for twin cover of size one in Section $3$.Now consider the case of a twin cover $S \subseteq V(G)$ of any size $k$, not necessarily one. If the input $(G,S,t)$ is a YES-instance,we know that there exists a clean ordering,say $\sigma$, of imbalance at most $t$. Since the imbalance of any clean ordering is completely determined by the relative order of twin cover vertices and the location of every clique of $G\setminus S$ in the ordering, it suffices to provide $\sigma$ when restricted to $S$ (say $s_1<\ldots<s_k$) and the location of every clique $C_1,....,C_r$ of $G\setminus S$ in $\sigma$ as a certificate.Let $S_0:=\phi$ and for every $i$ in $[k]$, let $S_i:=S_{i-1}\cup\{s_i\}$. For every $i$ in $[r]$, let us denote the location of $C_i$ as $d[i]$. For every clique $C$ of $G\setminus S$, let $\tau(C)$ denote the type of $C$. Also, for every $q$ in $[k+1]$, let $P_q := \{C_j ~|~ j\in[r], d[j]=q\}$, i.e., $P_q$ denotes the set of cliques of $G\setminus S$ that are placed at location $q$ in $\sigma$.
Given the certificate,one can compute the imbalance of $\sigma$ and compare it against the target, i.e., $t$. This computation can be done in strongly polynomial time as follows:

\begin{itemize}
    \item For every $i$ in $[k]$, the imbalance of $i^{\text{th}}$ twin cover vertex, $s_i$ can be computed as follows:
$$\mathcal{I}(s_i)= \left| \sum_{q=1}^{i} \sum_{\substack{C \in P_q \\ s_i \in \tau(C)}} |C|- \sum_{q=i+1}^{k+1} \sum_{\substack{C \in P_q \\ s_i \in \tau(C)}} |C| \right| $$
    \item For every $i$ in $[r]$, imbalance of the vertices of clique $C_i$ can be computed as follows:
    \begin{itemize}
        \item if $C_i$ is a small clique,
$$\gamma^{\star}(C_i) = \sum_{j=1}^{|C_i|} \Big| |\tau(C_i)\cap(S\setminus S_{d[i]-1}) |-|\tau(C_i)\cap S_{d[i]-1}| + (|C_i|-j)-(j-1) \Big|$$
        \item if $C_i$ is a large even clique,
$$\gamma^{\star}(C_i)=\gamma(|C_i|)+ \left\lfloor\frac{(|\tau(C_i)\cap(S\setminus S_{d[i]-1}) |-|\tau(C_i)\cap S_{d[i]-1}|)^2}{2} \right\rfloor$$
        \item if $C_i$ is a large odd clique,
$$\gamma^{\star}(C_i)=\gamma(|C_i|)+ \left\lceil\frac{(|\tau(C_i)\cap(S\setminus S_{d[i]-1}) |-|\tau(C_i)\cap S_{d[i]-1}|)^2}{2} \right\rceil$$
    \end{itemize}
\end{itemize}
Now, the imbalance of the layout can be computed as:
$$\sum_{i=1}^{k} \mathcal{I}(s_i) + \sum_{i=1}^{r} \gamma^{\star}(C_i).$$
This concludes the proof.
\end{proof}

\nmtodo{Maybe introduce a computational version of Lemma 4 to make this discussion a little more fluid.}

\section{Parameterizing by bounded Twin Cover}
\label{sec:ilp}
In this section, we describe an ILP approach for the imbalance problem parameterized by twin cover when the entire input is given explicitly in the standard form. As in the previous section, we use $G$ to denote the graph given as input, $S \subseteq V(G)$ denotes a twin cover of size $k$. Recall that the question we are addressing is if $G$ admits a layout whose imbalance is at most $t$. We denote the cliques of $G \setminus S$ by $C_1, \ldots, C_r$, and we let $\ell_i$ denote the number of vertices in $C_i$. We also use $\ell$ to denote $\max\{\ell_1, \ldots, \ell_r\}$. Further, for each $\tau\in 2^{S}, p\in[\ell]$, we use $W(\tau,p)$ to denote the number of $p$-sized cliques of type $\tau$. The number of variables that we introduce here will be a function of $k$ and $\ell$.

To begin with, let $\pi$ be an arbitrary but fixed order on the twin cover vertices. Label the vertices of $S$ in the order of their appearance in $\pi$ as $s_1 < \ldots <s_k$. Let $S_0:=\phi$ and for every $i$ in $[k]$, let $S_i:=S_{i-1} \cup \{s_i\}$. We also guess a sign signature $t_j \in \{+1,-1\}$ for all $j \in [k]$ which we will need for technical reasons that will be apparent in a moment. Now, for each $\tau \in 2^S, q \in [k+1], p \in [\ell]$, we introduce a variable $x_{\tau,p}^q$ for denoting the number of cliques of size $p$ and type $\tau$ placed at location $q$ in a layout. 

In particular, let $X := \{x_{\tau,p}^q ~|~ \tau \in 2^S, q \in [k+1], p \in [\ell]\}.$ We say that an assignment $f: X \rightarrow [n] \cup \{0\}$ is \emph{valid} if:

$$ \text{for each } \tau \in 2^S, p \in [\ell]: \sum_{q = 1}^{k+1} x_{\tau,p}^q = W(\tau,p). $$

For a valid assignment $f$, we let $\sigma_f$ be the following layout of $V(G)$: for each $q \in [k+1]$, $\tau \in 2^S$ and $p \in [\ell]$, we place $x_{\tau,p}^q$ $p$-sized cliques of type $\tau$ at location $q$. Further, we arrange the twin cover vertices according to $\pi$. Now, for $i \in [k]$, we define:

$$R(i) :=  \big| N(s_i) \cap (S \setminus S_i) \big|  + \left(\sum_{q=i+1}^{k+1} \sum_{\substack{\tau \in 2^S \\s.t.\\ s_i \in \tau}} \sum_{p = 1}^{\ell} (p \cdot x^q_{\tau,p})\right),$$

and:

$$L(i) :=  \big| N(s_i) \cap S_{i-1} \big|  + \left(\sum_{q=1}^{i} \sum_{\substack{\tau \in 2^S \\s.t.\\ s_i \in \tau}} \sum_{p = 1}^{\ell} (p \cdot x^q_{\tau,p})\right).$$

Given a valid assignment $f$, note that $L(i)$ and $R(i)$ give the number of left and right neighbors of $s_i$ if each $x^q_{\tau,p}$ is substituted with $f(x^q_{\tau,p})$.

Finally, for each $\tau \in 2^S$ and $q \in [k+1]$, we define $\Delta_\tau^q := \big| \tau \cap (S \setminus S_{q-1}) \big| - \big| \tau \cap S_{q-1} \big|$ and let:

$$c_{\tau,p}^q := \left( \sum_{i = 1}^{p}  \Big| \Delta^q_\tau + (p - i) - (i - 1) \Big|  \right).$$

We are now ready to describe the ILP formulation, which we propose as follows. 

Minimize:

$$\sum_{q=1}^{k+1} \sum_{\tau \in 2^S} \sum_{p = 1}^{\ell} c_{\tau,p}^q \cdot x_{\tau,p}^q  + \sum_{j=1}^{k} t_j (R(j) - L(j)) $$

% $$ + \sum_{i=1}^{k} t_i \left( |N(s_i) \cap (S \setminus S_i)| + \sum_{q=i+1}^{k+1} \sum_{\substack{\tau \in 2^S \\s.t.\\ s_i \in \tau}} \sum_{p \in 1}^{\ell} (p \cdot x^q_{\tau,p}) - |N(s_i) \cap S_{i-1}| -  \sum_{q=1}^{i} \sum_{\substack{\tau \in 2^S \\s.t.\\ s_i \in \tau}} \sum_{p \in 1}^{\ell} (p \cdot x^q_{\tau,p}) \right), $$ 

subject to:

$$\text{for each } \tau \in 2^S, p \in [\ell]: \sum_{q = 1}^{k+1} x_{\tau,p}^q = W(\tau,p),$$

$$\text{for all } j \in [k]: t_j (R(j) - L(j)) \geq 0, $$

and:

$$x_{\tau,p}^q \geq 0 \text{ for all } \tau \in 2^S, p \in [\ell], \mbox{ and } q \in [k+1].$$

Note that the number of variables in the ILP is given by $2^k \cdot (k+1) \cdot \ell$, which implies that the time we need to solve the ILP using Lenstra's algorithm is FPT in $k$ and $\ell$. The correctness of the formulation is largely self-evident, and we sketch a brief argument for the sake of completeness.

Any clean ordering $\sigma$ of $V(G)$ can be realized by an assignment to the variables of the ILP as follows: for each $q \in [k+1], \tau \in 2^S, p \in [\ell]$, we set $x_{\tau,p}^q$ to the number of $p$-sized cliques of type $\tau$ that are placed in location $q$ of $\sigma$. Further, let $\pi$ be the ordering $\sigma$ restricted to $S$. For each $j\in[k]$, we set $t_j = 1$ if $N_R(s_j,\sigma) \geq N_L(s_j,\sigma)$, and $t_j = -1$ otherwise. It is easy to check that the objective function of the ILP evaluates to $\mathcal{I}(\sigma)$ under this assignment of the variables. Since there exists an optimal clean ordering, we have that the ILP is minimized at a value that is at most the optimal imbalance of $G$ among all layouts that order the vertices of the twin cover according to $\pi$.

In the other direction, consider an assignment $g$ to the variables $X \cup \{t_j ~|~ j \in [k]\}$ that optimizes the ILP above. Let $f$ be the restriction of $g$ to $X$. By the first constraint, we know that $f$ is a valid assignment as defined before. Note that the setting of the values of $t_j$'s mimic the behavior of the absolute value function due to the second constraint. Based on this, it is straightforward to verify that the value of the objective function with respect to the assignment $f$ corresponds to the imbalance of the layout $\sigma_f$. This establishes that the optimal imbalance of $G$ among all layouts that order the vertices of the twin cover according to $\pi$ is also a lower bound for the optimal value of the ILP.

Therefore, by examining all $k!$ orderings of the twin cover vertices $S$ and solving the corresponding ILPs, it is clear from the discussion above that we can determine the optimal imbalance of $G$ in FPT time when parameterized by the size of the twin cover $S$ and size of the largest clique in $G \setminus S$. This concludes the proof of the following theorem.

\begin{theorem}
\label{thm:fpt}
\textsc{Imbalance} is FPT when parameterized by $(k+\ell)$, where $k$ is the size of a $\ell$-bounded twin cover. 
\end{theorem}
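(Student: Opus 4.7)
The plan is to formulate an integer linear program whose number of variables depends only on $k$ and $\ell$, and then invoke Lenstra's algorithm. By Lemma~\ref{lem:truetwinsstaytogether} it suffices to consider clean orderings, and in a clean ordering every clique of $G \setminus S$ occupies a contiguous block that is fully described by its type $\tau \in 2^S$, its size $p \in [\ell]$, and its location $q \in [k+1]$. Up to internal permutations of each clique (which preserve imbalance), such a layout is therefore determined by a relative order $\pi$ on $S$ together with the counts $x_{\tau,p}^q$ of size-$p$ cliques of type $\tau$ placed at location $q$.

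First, I would iterate over the $k!$ relative orderings $\pi$ of the twin cover vertices. Fixing $\pi$ pins down the prefix sets $S_i$ and hence the constants $\Delta_\tau^q$ and $c_{\tau,p}^q$ that appear in the objective. Within each iteration I would introduce the $2^k(k+1)\ell$ variables $\{x_{\tau,p}^q\}$ together with the nonnegativity constraints and the conservation constraints $\sum_q x_{\tau,p}^q = W(\tau,p)$, ensuring that every clique is placed at exactly one location. Because the total number of variables is bounded by a function of $(k,\ell)$ alone, Lenstra's algorithm will solve each ILP in FPT time.

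Next, I would construct the linear objective. The imbalance contributed by non-cover vertices is linear in the $x_{\tau,p}^q$ since, by direct computation, $c_{\tau,p}^q$ equals the total per-vertex imbalance of a single size-$p$ clique of type $\tau$ placed at location $q$. The imbalance of each twin cover vertex $s_j$, however, is the absolute value $|R(j)-L(j)|$, which I would linearize by guessing an additional sign pattern $(t_1,\ldots,t_k) \in \{+1,-1\}^k$ and appending the constraints $t_j(R(j)-L(j)) \geq 0$; at any feasible point this forces the linear expression $t_j(R(j)-L(j))$ to coincide with $|R(j)-L(j)|$. Taking the minimum of the resulting ILP over all $k!\cdot 2^k$ pairs $(\pi,t)$ then matches $\mathcal{I}(G)$.

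Finally, I would solve each ILP with Lenstra's algorithm and compare the minimum value to the target. Correctness reduces to two routine checks: every optimal clean ordering yields a feasible assignment (with $t_j$ chosen as the sign of $R(j)-L(j)$) whose objective equals its imbalance, and conversely every feasible assignment unpacks to a clean layout $\sigma_f$ with matching imbalance. The main obstacle is verifying that the sign-guessing gadget faithfully encodes the absolute values over all of $[k]$; once that is in place, what remains is a bookkeeping verification that the constants $c_{\tau,p}^q$ and the linear expressions $L(j),R(j)$ really do tally the imbalance contributions as claimed.
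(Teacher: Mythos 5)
Your proposal is correct and follows essentially the same route as the paper: the same ILP with variables $x_{\tau,p}^q$ and conservation constraints $\sum_q x_{\tau,p}^q = W(\tau,p)$, the same guessing of the twin cover ordering $\pi$ and of a sign signature to linearize the absolute values $|R(j)-L(j)|$, and the same appeal to Lenstra's algorithm since the number of variables is bounded by a function of $k$ and $\ell$. The two directions of correctness you outline (clean layouts give feasible assignments of matching objective value, and feasible assignments unpack to layouts $\sigma_f$ of matching imbalance) are exactly the paper's argument.
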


\section{Concluding Remarks}
\label{sec:concl}
We investigated the complexity of \textsc{Imbalance} parameterized by twin cover. We demonstrated that that the problem is XP by a dynamic programming approach, and that it is FPT when parameterized by twin cover as well as the size of the largest clique in the graph when the twin cover is removed. It is also easy to see that the problem is FPT when parameterized by the twin cover and the \emph{number} of cliques outside the twin cover, by simply restricting our attention to clean orderings and trying all possible permutations of the cliques and guessing where the twin cover vertices `fit' among them. This leads us to conclude that the tractable cases are, roughly speaking, when there are a small number of large cliques or a large number of small cliques, and the interesting cases lie in the middle of this spectrum. The most evident open question that emerges from our discussions is the issue of whether \textsc{Imbalance} is FPT when parameterized by twin cover.

We also introduced a notion of succinct representations of graphs in terms of their twin cover. It would be interesting to revisit problems which are FPT in twin cover but with a ``pseudo-polynomial'' running time in the setting of succinct input as described here. In particular, from the work of Ganian~\cite{Ganian11}, there are already some problems whose stated algorithms are not efficient in the succinct setting. For example, the algorithm for~\textsc{Boxicity} relies on the fact that the edges within the clique are irrelevant, based on which we may obtain an equivalent instance for which the twin cover becomes a vertex cover. It would be an interesting direction of future work to examine which of these problems remain FPT with a strongly polynomial running time when we work with succinct representations. Our work here demonstrates that \textsc{Imbalance} is already one problem for which the representation has a non-trivial influence on the complexity.

%
% ---- Bibliography ----
%
% BibTeX users should specify bibliography style 'splncs04'.
% References will then be sorted and formatted in the correct style.

\bibliography{mybibliography}

\end{document}